\newcolumntype{R}{>{\raggedleft\arraybackslash}X}
\newcolumntype{L}{>{\raggedright\arraybackslash}X}
\newcolumntype{C}{>{\centering\arraybackslash}X}
\newcolumntype{A}{>{\columncolor{gray!25}}C}
\newcolumntype{a}{>{\columncolor{gray!25}}c}
\newcolumntype{.}{D{.}{.}{-1}}
\renewcommand\p@subfigure{\arabic{figure}.}
\setlist[itemize]{leftmargin=3\parindent}
\setlist[enumerate]{leftmargin=2\parindent}
\theoremstyle{plain}
\newtheorem{lemma}{Lemma}
\newtheorem{proposition}{Proposition}
\newtheorem{theorem}{Theorem}
\theoremstyle{definition}
\newtheorem{definition}{Definition}
\newtheorem{example}{Example}
\theoremstyle{remark}
\newtheorem{remark}{Remark}
\def\keywords{\vspace{.5em} 
{\textit{Keywords}:\,\relax%
}}
\def\JEL{\vspace{.5em} 
{\textbf{\emph{JEL} classification number}:\,\relax%
}}
\def\AMS{\vspace{.5em} 
{\textbf{\emph{AMS} classification number}:\,}}
\author{L\'aszl\'o Csat\'o\thanks{~e-mail: laszlo.csato@uni-corvinus.hu} }
\affil{Department of Operations Research and Actuarial Sciences \\ Corvinus University of Budapest \\ MTA-BCE ''Lend\"ulet'' Strategic Interactions Research Group \\ Budapest, Hungary}
\title{Measuring centrality by a generalization of degree\thanks{~We are grateful to Herbert Hamers for drawing our attention to centrality measures, and to Dezs\H{o} Bednay, Pavel Chebotarev and Tam\'as Sebesty\'en for useful advices. \newline
The research was supported by OTKA grant K 111797 and MTA-SYLFF (The Ryoichi Sasakawa Young Leaders Fellowship Fund) grant 'Mathematical analysis of centrality measures', awarded to the author in 2015.}}
\date{\today}
\begin{document}

\maketitle

\begin{abstract}
Network analysis has emerged as a key technique in communication studies, economics, geography, history and sociology, among others. A fundamental issue is how to identify key nodes, for which purpose a number of centrality measures have been developed.
This paper proposes a new parametric family of centrality measures called generalized degree. It is based on the idea that a relationship to a more interconnected node contributes to centrality in a greater extent than a connection to a less central one. Generalized degree improves on degree by redistributing its sum over the network with the consideration of the global structure. Application of the measure is supported by a set of basic properties. A sufficient condition is given for generalized degree to be rank monotonic, excluding counter-intuitive changes in the centrality ranking after certain modifications of the network. The measure has a graph interpretation and can be calculated iteratively.
Generalized degree is recommended to apply besides degree since it preserves most favourable attributes of degree, but better reflects the role of the nodes in the network and has an increased ability to distinguish among their importance. 

\JEL{D85, Z13}

\AMS{15A06, 91D30}

\keywords{Network; centrality measure; degree; axiomatic approach}
\end{abstract}

\section{Introduction} \label{Sec1}

In recent years there has been a boom in network analysis. One fundamental concept that researchers try to capture is \emph{centrality}: a quantitative measure revealing the importance of nodes in the network. The first efforts to formally define centrality were made by \citet{Bavelas1948} and \citet{Leavitt1951}. Since then, a lot of centrality measures have been suggested (for a survey, see \citet{WassermanFaust1994}; for a short historical account, see \citet{BoldiVigna2014}; for some applications, see \citet{Jackson2010}).

Despite the agreement that centrality is an important attribute of networks, there is no consensus on its accurate meaning \citep{Freeman1979}. The central node of a star is obviously more important than the others but its role can be captured in several ways: it has the highest degree, it is the closest to other nodes \citep{Bavelas1948}, it acts as a bridge along the most shortest paths \citep{Freeman1977}, it has the largest number of incoming paths \citep{Katz1953, Bonacich1987}, or it maximizes the dominant eigenvector of a matrix adequately representing the network \citep{Seeley1949, BrinPage1998}.

An important question is the domain of centrality measures. We examine symmetric and unweighted networks, i.e. links have no direction and they are equally important, but non-connectedness is allowed.

The goal of current research is to introduce a centrality concept on the basis of degree by taking into account the whole structure of the network. It is achieved through the Laplacian matrix of the network.

This idea has been adopted earlier. For instance, \citet{ChebotarevShamis1997b_eng} introduce a connectivity index from which a number of centrality measures can be derived.
\citet{Klein2010} defines an edge-centrality measure exhibiting several nice features.
\citet{AvrachenkovMazalovTsynguev2015} propose a new concept of betweenness centrality for weighted networks with the Laplacian matrix.
\citet{MasudaKawamuraKori2009}, \citet{MasudaKori2010} and \citet{RanjanZhang2013} also use it in order to reveal the overall position of a node in a network.

We will consider a less known paired comparison-based ranking method called \emph{generalized row sum} \citep{Chebotarev1989_eng, Chebotarev1994} for the purpose.\footnote{~\citet[p.~1511]{ChebotarevShamis1997b_eng} note that 'there exists a certain relation between the problem of centrality evaluation and the problem of estimating the strength of players from incomplete tournaments'. The similarity between the two areas is mentioned by \citet{MonsuurStorcken2004}, too.}
In fact, our centrality measure redistributes the pool of aggregated degree (i.e. sum of degrees over the network) by considering all connections among the nodes, therefore it will be called \emph{generalized degree}. The impact of indirect connections is governed by a parameter such that one limit of generalized degree results in degree, while the other leads to equal centrality for all nodes (in a connected graph).

While there is a large literature in mathematical sociology on centrality measures, their comparison and evaluation clearly requires further investigation. We have chosen the axiomatic approach, a standard path in cooperative game and social choice theory, in order to confirm the validity of generalized degree for measuring the importance of nodes.
This line is followed by the following authors, among others.
\citet{Freeman1979} states that all centrality measures have an implicit starting point: the center of a star is the most central possible position.
\citet{Sabidussi1966} defines five properties that should be satisfied by a sensible centrality on an undirected graph. These axioms are also accepted by \citet{Nieminen1974}.
\citet{LandherrFriedlHeidemann2010} analyse five common centrality measures on the basis of three simple requirements concerning their behaviour.
\citet{BoldiVigna2014} introduce three axioms for directed networks, namely size, density and score monotonicity, and check whether they are satisfied by eleven standard centrality measures.

Though, characterizations of centrality measures are scarce.
\citet{Kitti2012} provides an axiomatization of eigenvector centrality on weighted networks.
\citet{Garg2009} characterizes some measures based on shortest paths and shows that degree, closeness and decay centrality belong to the same family, obtained by adding only one axiom to a set of four.
\citet{DequiedtZenou2015} present axiomatizations of Katz-Bonacich, degree and eigenvector centrality founded on the consistency axiom, which relates the properties of the measure for a given network to its behaviour on a reduced network. Similarly to our paper, \citet{Garg2009} and \citet{DequiedtZenou2015} use the domain of symmetric, unweighted networks.

Centrality measures are often used in order to identify the nodes with the highest importance, i.e. the \emph{center} of the network. \citet{MonsuurStorcken2004} present an axiomatic characterization of three different center concepts for connected, symmetric and unweighted networks.

However, a complete axiomatization of generalized degree will not be provided. While it is not debated that such characterizations are a correct way to distinguish between centrality measures, we think they have limited significance for applications. If one should determine the centrality of the nodes in a \emph{given} network, he/she is not much interested in the properties of the measure on smaller networks. Characterizations can provide some aspects of the choice but the consequences of the axioms on the actual network often remain obscure. From this viewpoint, the normative approach of \citet{Sabidussi1966}, \citet{LandherrFriedlHeidemann2010}, or \citet{BoldiVigna2014} seems to be more advantageous.

Our axiomatic scrutiny is mainly based on a property from \citet{Sabidussi1966} with some modification (in fact, strengthening) to eliminate the possibility of counter-intuitive changes in the centrality ranking of nodes. This requirement is well-known in paired comparison-based ranking (see, for instance, \citet{Gonzalez-DiazHendrickxLohmann2013}). In the case of centrality measures, it has been discussed by \citet{Chienetal2004} (with a proof that PageRank satisfies it) and proposed by \citet{BoldiVigna2014} as an essential counterpoint to score monotonicity. \citet{LandherrFriedlHeidemann2010} analyse similar properties of centrality measures, however, they mainly concentrate on the change of centrality scores (with the exception of Property 3, not discussed here). Given the aim of most applications, i.e. to distinguish the nodes with respect to their influence, it makes sense to take this relative point of view.

This requirement is called \emph{adding rank monotonicity}. A sufficient condition is given for the proposed measure to satisfy them, which is an important contribution of us.

It will also be presented that in a star network, generalized degree associates the highest value to its center, and it means a good tie-breaking rule of degree with an appropriate parameter choice.
On the basis of these results, the measure is recommended to use besides degree since the measure preserves most favourable attributes of degree but better reflects the role of the nodes in the network and has a much higher differential level.

The axiomatic point of view is not exclusive. \citet{BorgattiEverett2006} criticize \citet{Sabidussi1966}'s approach because it does not 'actually attempt to explain what centrality is'. Instead of this, \citet{BorgattiEverett2006} present a graph-theoretic review of centrality measures that classifies them according to the features of their calculation.
Thus a clear interpretation of generalized degree on the network graph will also be given, revealing that it is similar to degree-like walk-based measures: a node's centrality is a function of the centrality of the nodes it is connected to, and a relationship to a more interconnected node contributes to the own centrality to a greater extent than a connection to a less central one.

The paper proceeds as follows. Section~\ref{Sec2} defines the framework, introduces the centrality measure and presents its properties. In Section~\ref{Sec3}, we discuss the parameter choice by an axiomatic analysis, and highlight the differences to degree. Section~\ref{Sec4} gives an interpretation for generalized degree on the network graph through an iterative decomposition. Finally, Section~\ref{Sec5} summarizes the main results and draws the directions of future research.
Because of a new measure is presented, the paper contains more thoroughly investigated examples than usual.

\section{Generalized degree centrality} \label{Sec2}

We consider a finite set of nodes $N = \{ 1,2, \dots ,n \}$. A \emph{network} defined on $N$ is an unweighted, undirected graph (without loops or multiple edges) with the set of nodes $N$. The \emph{adjacency matrix} representation is adopted, the network is given by $(N,A)$ such that $A \in \mathbb{R}^{n \times n}$ is a symmetric matrix, $a_{ij} = 1$ if nodes $i$ and $j$ are connected and $a_{ij} = 0$ otherwise. If it does not cause inconvenience, the underlying graph will also be referred to as the network. Two nodes $i, j \in N$ are called \emph{symmetric} if a relabelling is possible such that the positions of $i$ and $j$ are interchanged and the network still has the same structure.

A \emph{path} between two nodes $i, j \in N$ is a sequence $(i = k_0, k_1, \dots ,k_m = j)$ of nodes such that $a_{k_\ell k_{\ell + 1}} = 1$ for all $\ell = 0,1, \dots ,m-1$. The network is called \emph{connected} if there exists a path between two arbitrary nodes.
The network graph should not be connected. A maximal connected subnetwork of $(N,A)$ is a \emph{component} of the network.
Let $\mathcal{N}$ denote the finite set of networks defined on $N$, and $\mathcal{N}^n$ denote the class of all networks $(N,A) \in \mathcal{N}$ with $|N| = n$.

Vectors are indicated by bold fonts and assumed to be column vectors.
Let $\mathbf{e} \in \mathbb{R}^{n}$ be given by $e_i = 1$ for all $i = 1,2, \dots ,n$ and $I \in \mathbb{R}^{n \times n}$ be the identity matrix, i.e., $I_{ii} = 1$ for all $i = 1,2, \dots ,n$ and $I_{ij} = 0$ for all $i \neq j$.

\begin{definition} \label{Def1}
\emph{Centrality measure}:
Let $(N,A) \in \mathcal{N}^n$ be a network. \emph{Centrality measure} $f$ is a function that assigns an $n$-dimensional vector of nonnegative real numbers to $(N,A)$ with $f_i(N,A)$ being the centrality of node $i$.
\end{definition}

A centrality measure will be denoted by $f: \mathcal{N}^n \to \mathbb{R}^n$. We focus on the centrality ranking, so centrality measures are invariant under multiplication by positive scalars (normalization): node $i$ is said to be at least as central as node $j$ in the network $(N,A)$ if and only if $f_i(N,A) \geq f_j(N,A)$.

\begin{definition} \label{Def2}
\emph{Degree}:
Let $(N,A) \in \mathcal{N}^n$ be a network. \emph{Degree} centrality $\mathbf{d}: \mathcal{N}^n \to \mathbb{R}^n$ is given by $\mathbf{d} = A \mathbf{e}$.
\end{definition}

A network is called \emph{regular} if all nodes have the same degree.

Degree is probably the oldest measure of importance ever used. It is usually a good baseline to approximate centrality. The major disadvantage of degree is that indirect connections are not considered at all, it does not reflect whether a given node is connected to central or peripheral nodes \citep{LandherrFriedlHeidemann2010}. This attribute is captured by the following property. 

\begin{definition} \label{Def3}
\emph{Independence of irrelevant connections} ($IIC$):
Let $(N,A),(N,A') \in \mathcal{N}^n$ be two networks and $k,\ell \in N$ be two distinct nodes such that $A$ and $A'$ are identical but $a'_{k \ell} = 1- a_{k \ell}$ (and $a'_{\ell k} = 1- a_{\ell k}$).
Centrality measure $f: \mathcal{N}^n \to \mathbb{R}^n$ is called \emph{independent of irrelevant connections} if $f_i(N,A) \geq f_j(N,A) \Rightarrow f_i(N,A') \geq f_j(N,A')$ for all $i,j \in N \setminus \{ k,\ell \}$.
\end{definition}

Independence of irrelevant connections is an adaptation of the axiom \emph{independence of irrelevant matches}, defined for general tournaments \citep{Rubinstein1980, Gonzalez-DiazHendrickxLohmann2013}. It is used in a modified form for a characterization of the degree center (i.e. nodes with the highest degree) under the name \emph{partial independence} \citep{MonsuurStorcken2004}.

\begin{lemma} \label{Lemma1}
Degree satisfies $IIC$.
\end{lemma}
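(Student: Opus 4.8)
The plan is to exploit the purely local nature of degree. By Definition~\ref{Def2} the degree of an arbitrary node $i$ is the $i$-th row sum of the adjacency matrix, $d_i(N,A) = \sum_{m \in N} a_{im}$, so a node's centrality depends on the matrix only through the entries of its own row. First I would record what the modification in Definition~\ref{Def3} actually does: passing from $A$ to $A'$ flips exactly the two symmetric entries $a_{k\ell}$ and $a_{\ell k}$ and leaves every other entry of the matrix unchanged.

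The key step is then to verify that the degree of any \emph{third} node is untouched. For any $i \in N \setminus \{ k,\ell \}$, neither of the altered positions $(k,\ell)$ nor $(\ell,k)$ lies in row $i$, precisely because $i \neq k$ and $i \neq \ell$. Hence $a'_{im} = a_{im}$ for every $m \in N$, and summing gives $d_i(N,A') = d_i(N,A)$ for all $i \in N \setminus \{ k,\ell \}$. In other words, the scores of the nodes relevant to the axiom are not merely order-preserved but are literally preserved.

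From this the required implication is immediate. If $d_i(N,A) \geq d_j(N,A)$ for some $i,j \in N \setminus \{ k,\ell \}$, then substituting the two equalities just established yields $d_i(N,A') = d_i(N,A) \geq d_j(N,A) = d_j(N,A')$, which is exactly the conclusion demanded by $IIC$. There is essentially no obstacle in this argument; the only point meriting care is the bookkeeping observation that $i$ and $j$ being distinct from both $k$ and $\ell$ is what keeps the two modified entries out of their rows, so that degree reacts only at the two endpoints of the toggled connection.
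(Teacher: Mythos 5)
Your proof is correct and is precisely the argument the paper leaves implicit (Lemma~\ref{Lemma1} is stated without proof as an immediate consequence of the locality of degree): flipping the entries $a_{k\ell}$ and $a_{\ell k}$ does not touch row $i$ of $A$ for any $i \in N \setminus \{k,\ell\}$, so the degrees of all third nodes, and hence their ordering, are literally preserved. No issues.
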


Example~\ref{Examp1} shows that independence of irrelevant connections is an axiom one would rather not have.

\begin{figure}[htbp]
\centering
\caption{Networks of Example~\ref{Examp1}}
\label{Fig1}
  
\begin{subfigure}{\textwidth}
  \centering
  \subcaption{Network $(N,A)$}
  \label{Fig1a}
\begin{tikzpicture}[scale=1,auto=center, transform shape, >=triangle 45]
\tikzstyle{every node}=[draw,shape=circle];
  \node (n1) at (0,0) {$1$};
  \node (n2) at (2,0) {$2$};
  \node (n3) at (4,0) {$3$};
  \node (n4) at (6,0) {$4$};
  \node (n5) at (8,0) {$5$};

  \foreach \from/\to in {n1/n2,n2/n3,n3/n4}
    \draw (\from) -- (\to);
\end{tikzpicture}
\end{subfigure}

\vspace{0.5cm}
\begin{subfigure}{\textwidth}
  \centering
  \subcaption{Network $(N,A')$}
  \label{Fig1b}
\begin{tikzpicture}[scale=1,auto=center, transform shape, >=triangle 45]
\tikzstyle{every node}=[draw,shape=circle];
  \node (n1) at (0,0) {$1$};
  \node (n2) at (2,0) {$2$};
  \node (n3) at (4,0) {$3$};
  \node (n4) at (6,0) {$4$};
  \node (n5) at (8,0) {$5$};

  \foreach \from/\to in {n1/n2,n2/n3,n3/n4,n4/n5}
    \draw (\from) -- (\to);
\end{tikzpicture}
\end{subfigure}
\end{figure}
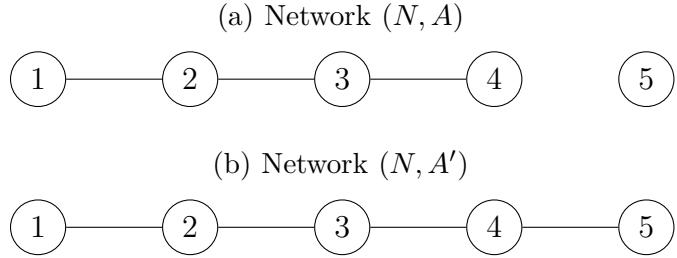

\begin{example} \label{Examp1}
Consider the networks $(N,A), (N,A') \in \mathcal{N}^5$ on Figure~\ref{Fig1}. Note that nodes $1$ and $4$, and $2$ and $3$ are symmetric in $(N,A)$, moreover, nodes $1$ and $5$, and $2$ and $4$ are symmetric in $(N,A')$.
$d_2 = d_3 = 2$ in both cases, but node $3$ seems to be more central in $(N,A')$ than node $2$.
\end{example}

The following centrality measure will be able to eliminate this shortcoming of degree.
It uses the \emph{Laplacian matrix} $L = \left( \ell_{ij} \right) \in \mathbb{R}^{n \times n}$ of the network $(N,A)$, given by $\ell_{ij} = -a_{ij}$ for all $i \neq j$ and $\ell_{ii} = d_i$ for all $i = 1,2, \dots ,n$.

\begin{definition} \label{Def4}
\emph{Generalized degree}:
Let $(N,A) \in \mathcal{N}^n$ be a network. \emph{Generalized degree} centrality $\mathbf{x}(\varepsilon): \mathcal{N}^n \to \mathbb{R}^n$ is given by $(I + \varepsilon L) \mathbf{x}(\varepsilon) = \mathbf{d}$, where $\varepsilon > 0$ is a parameter.
\end{definition}

Parameter $\varepsilon$ reflects the role of indirect connections, it is responsible for taking into account the centrality of neighbours, hence for breaking $IIC$. 

\begin{remark} \label{Rem1}
A straightforward formula for the generalized degree of node $i$ is
\begin{equation} \label{eq_main}
d_i = x_i(\varepsilon) + \varepsilon \sum_{j \in N \setminus \{ i \}} a_{ij} \left[ x_i(\varepsilon) - x_j(\varepsilon)\right]. 
\end{equation}
\end{remark}

\begin{example} \label{Examp2}
Consider the networks $(N,A), (N,A') \in \mathcal{N}^5$ on Figure~\ref{Fig1}. 
Generalized degree centrality is as follows:
\[
\mathbf{x}(\varepsilon)(N,A) = \left[ \frac{1+ 3 \varepsilon}{1 + 2 \varepsilon}, \, \frac{2+ 3 \varepsilon}{1 + 2 \varepsilon}, \, \frac{2+ 3 \varepsilon}{1 + 2 \varepsilon}, \, \frac{1+ 3 \varepsilon}{1 + 2 \varepsilon}, \, 0 \right]^\top;
\]
\[
\mathbf{x}(\varepsilon)(N,A') = \frac{1}{1 + 8 \varepsilon + 21 \varepsilon^2 + 20 \varepsilon^3 + 5 \varepsilon^4} \left[
\begin{array}{c}
1 + 9 \varepsilon + 27 \varepsilon^2 + 30 \varepsilon^3 + 8 \varepsilon^4 \\
2 + 15 \varepsilon + 37 \varepsilon^2 + 33 \varepsilon^3 + 8 \varepsilon^4 \\
2 + 16 \varepsilon + 40 \varepsilon^2 + 34 \varepsilon^3 + 8 \varepsilon^4 \\
2 + 15 \varepsilon + 37 \varepsilon^2 + 33 \varepsilon^3 + 8 \varepsilon^4 \\
1 + 9 \varepsilon + 27 \varepsilon^2 + 30 \varepsilon^3 + 8 \varepsilon^4 \\
\end{array}
\right].
\]
Thus $x_2(\varepsilon)(N,A) = x_3(\varepsilon)(N,A)$ but $x_2(\varepsilon)(N,A') < x_3(\varepsilon)(N,A')$.
\end{example}

\begin{remark} \label{Rem2}
Generalized degree violates $IIC$ for any $\varepsilon > 0$.
\end{remark}

Some basic attributes of generalized degree are listed below.

\begin{proposition} \label{Prop1}
Generalized degree satisfies the following properties for any fixed parameter $\varepsilon > 0$:
\begin{enumerate}
\item
\emph{Existence and uniqueness}: a unique vector $\mathbf{x}(\varepsilon)$ exists for any network $(N,A) \in \mathcal{N}$.

\item
\emph{Anonymity} ($ANO$): if the networks $(N,A),(\sigma N, \sigma A) \in \mathcal{N}$ are such that $(\sigma N, \sigma A)$ is given by a permutation of nodes $\sigma: N \rightarrow N$ from $(N,A)$, then $x_i(\varepsilon)(N,A) = x_{\sigma i}(\varepsilon)(\sigma N, \sigma A)$ for all $i \in N$.

\item
\emph{Degree preservation}: $\sum_{i \in N} x_i(\varepsilon) = \sum_{i \in N} d_i$ for any network $(N,A) \in \mathcal{N}$.

\item
\emph{Zero presumption} ($ZP$): $x_i(\varepsilon) = 0$ if and only if $d_i = 0$, $i$ is an isolated node.

\item
\emph{Independence of disconnected parts} ($IDCP$): if the networks $(N,A),(N,A') \in \mathcal{N}$ are such that $N^1 \cup N^2 = N$, $N^1 \cap N^2 = \emptyset$ and $a_{ik} = 0$, $a'_{ik} = 0$ for all $i \in N^1$, $k \in N^2$ and $a_{ij} = a'_{ij}$ for all $i,j \in N^1$, then $x_i(\varepsilon)(N,A) = x_i(\varepsilon)(N,A')$ for all $i \in N^1$ and $\sum_{i \in N^1} x_i(\varepsilon)(N,A) = \sum_{i \in N^1} x_i(\varepsilon)(N,A') = \sum_{i \in N^1} d_i$.\footnote{~Degrees in the component given by the set of nodes $N^1$ are the same in $(N,A)$ and $(N,A')$ due to the condition $a_{ij} = a'_{ij}$ for all $i,j \in N^1$.}

\item
\emph{Boundedness}: $\min \{ d_i: i \in N \} \leq x_j(\varepsilon) \leq \max \{ d_i: i \in N \}$ for all $j \in N$ and for any network $(N,A) \in \mathcal{N}$.

\item
\emph{Agreement}: $\lim_{\varepsilon \to 0} \mathbf{x}(\varepsilon) = \mathbf{d}$ and $\lim_{\varepsilon \to \infty} \mathbf{x}(\varepsilon) = \left( \sum_{i \in N} d_i / n \right) \mathbf{e}$ for any connected network $(N,A) \in \mathcal{N}$.

\item
\emph{Flatness preservation} ($FP$): $x_i(\varepsilon) = x_j(\varepsilon)$ for all $i,j \in N$ if and only if the network is regular.
\end{enumerate}

\end{proposition}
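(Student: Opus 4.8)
The plan is to extract everything from three structural facts about the Laplacian $L = D - A$, where $D$ is the diagonal matrix of degrees: it is symmetric, it annihilates the all-ones vector ($L\mathbf{e} = \mathbf{0}$, since every row of $L$ sums to zero), and it is positive semidefinite, because $\mathbf{z}^\top L \mathbf{z} = \sum_{i<j} a_{ij}(z_i - z_j)^2 \geq 0$ for every $\mathbf{z}$. For a connected network one has in addition $\ker L = \mathrm{span}\{\mathbf{e}\}$. Once these are recorded, most of the eight properties fall out almost mechanically.

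For existence and uniqueness I would observe that the eigenvalues of $I + \varepsilon L$ are $1 + \varepsilon\lambda$ with $\lambda \geq 0$ an eigenvalue of $L$, hence all $\geq 1 > 0$; thus $I + \varepsilon L$ is invertible and $\mathbf{x}(\varepsilon) = (I + \varepsilon L)^{-1}\mathbf{d}$ is well-defined and unique. Anonymity follows by permutation similarity: if $P$ is the permutation matrix of $\sigma$, the relabelled network has Laplacian $PLP^\top$ and degree vector $P\mathbf{d}$, and $(I + \varepsilon PLP^\top)(P\mathbf{x}) = P(I + \varepsilon L)\mathbf{x} = P\mathbf{d}$, so uniqueness forces the permuted solution to be $P\mathbf{x}$. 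Degree preservation comes from left-multiplying the defining equation by $\mathbf{e}^\top$ and using symmetry with $L\mathbf{e} = \mathbf{0}$, which gives $\mathbf{e}^\top(I + \varepsilon L) = \mathbf{e}^\top$ and hence $\sum_i x_i = \mathbf{e}^\top\mathbf{x} = \mathbf{e}^\top\mathbf{d} = \sum_i d_i$. Flatness preservation is settled in both directions at once by substituting a constant vector $c\mathbf{e}$: since $L(c\mathbf{e}) = \mathbf{0}$, we have $\mathbf{x} = c\mathbf{e}$ if and only if $\mathbf{d} = c\mathbf{e}$, i.e. if and only if the network is regular.

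The one genuinely non-algebraic ingredient, and where I expect the main effort, is boundedness, which I would establish by a discrete maximum principle applied to formula~\eqref{eq_main} of Remark~\ref{Rem1}. Taking an index $j$ maximizing $x_j$, every term $a_{jk}[x_j - x_k]$ is nonnegative, so $d_j = x_j + \varepsilon\sum_k a_{jk}[x_j - x_k] \geq x_j$, whence $\max_k x_k \leq \max_i d_i$; the symmetric argument at a minimizing index gives $\min_k x_k \geq \min_i d_i$. This simultaneously yields $x_j \geq \min_i d_i \geq 0$, the nonnegativity needed elsewhere. Zero presumption is then immediate: if $d_i = 0$ the node is isolated, the $i$-th row of $I + \varepsilon L$ reduces to a single diagonal $1$, forcing $x_i = 0$; conversely, if $x_i = 0$ then \eqref{eq_main} gives $d_i = -\varepsilon\sum_j a_{ij}x_j \leq 0$ by nonnegativity, so $d_i = 0$.

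Independence of disconnected parts uses that, with no edges between $N^1$ and $N^2$, both $L$ and the whole linear system block-decompose, so the $N^1$-coordinates of $\mathbf{x}$ solve the subsystem $(I + \varepsilon L^1)\mathbf{x}^1 = \mathbf{d}^1$ determined solely by the $N^1$-block; this block is identical in $(N,A)$ and $(N,A')$, and applying the degree-preservation computation to $L^1$ (whose row sums also vanish) yields the claimed coordinatewise equality and the stated sum. Finally, for agreement I would diagonalize $L = \sum_k \lambda_k \mathbf{v}_k\mathbf{v}_k^\top$, giving $(I + \varepsilon L)^{-1} = \sum_k (1 + \varepsilon\lambda_k)^{-1}\mathbf{v}_k\mathbf{v}_k^\top$; then $\varepsilon \to 0$ gives $\mathbf{x} \to \mathbf{d}$ by continuity, while for a connected network the simple zero eigenvalue with eigenvector $\mathbf{e}/\sqrt{n}$ dominates as $\varepsilon \to \infty$, leaving $\mathbf{x} \to \tfrac1n\mathbf{e}\mathbf{e}^\top\mathbf{d} = \left(\sum_i d_i / n\right)\mathbf{e}$.
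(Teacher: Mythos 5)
Your proposal is correct, and for seven of the eight items it follows essentially the same route as the paper: positive definiteness of $I+\varepsilon L$ for existence/uniqueness, permutation similarity for anonymity, $\mathbf{e}^\top L=\mathbf{0}^\top$ for degree preservation, the block structure of $L$ for $IDCP$, a discrete maximum principle on formula~\eqref{eq_main} for boundedness, and substitution of a constant vector for $FP$. The one genuine divergence is the $\varepsilon\to\infty$ half of agreement. The paper argues combinatorially: it supposes the limits are not all equal, uses connectedness to find an edge $a_{km}=1$ joining a node attaining the maximal limit to one with a strictly smaller limit, and derives a contradiction from \eqref{eq_main} as $\varepsilon\to\infty$; the constant is then pinned down by degree preservation. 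You instead diagonalize $L$ and let $(1+\varepsilon\lambda_k)^{-1}\to 0$ for all $\lambda_k>0$, leaving only the projection onto $\ker L=\mathrm{span}\{\mathbf{e}\}$. Your spectral argument is cleaner and gives the limiting value $\left(\sum_i d_i/n\right)\mathbf{e}$ in one stroke, but it silently invokes the fact that $0$ is a \emph{simple} eigenvalue of $L$ exactly for connected graphs (standard, but worth a citation or a one-line proof via $\mathbf{z}^\top L\mathbf{z}=\sum_{i<j}a_{ij}(z_i-z_j)^2$); the paper's argument is more elementary and stays entirely within formula~\eqref{eq_main}. A minor point in your favour: by proving boundedness before zero presumption you make the nonnegativity $x_j\geq\min_i d_i\geq 0$ available where the converse direction of $ZP$ needs it, whereas the paper's ordering leaves that dependence implicit.
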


\begin{proof}
The statements above will be proved in the corresponding order.
\begin{enumerate}
\item
The Laplacian matrix of an undirected graph is positive semidefinite \citep[Theorem 2.1]{Mohar1991}, hence $I + \varepsilon L$ is positive definite.

\item
Generalized degree is invariant under isomorphism, it depends just on the structure of the graph and not on the particular labelling of the nodes.

\item
Sum of columns of $L$ is zero.

\item
If $d_i = 0$, then $x_i(\varepsilon) = 0$ since the corresponding row of $L$ contains only zeros. If $x_i(\varepsilon) = 0$ then $x_j(\varepsilon) \geq x_i(\varepsilon)$ for any $j \in N$, so $d_i = 0$ due to formula \eqref{eq_main}.

\item
Formula~\eqref{eq_main} gives the same equation for node $i \in N^1$ in the case of $(N,A)$ and $(N,A')$, and generalized degree is unique.
Sum of equations concerning nodes in $N^1$ gives $\sum_{i \in N^1} x_i(\varepsilon)(N,A) = \sum_{i \in N^1} x_i(\varepsilon)(N,A') = \sum_{i \in N^1} d_i$.

\item
Let $x_j(\varepsilon) = \min \{ x_i(\varepsilon): i \in N \}$.
Equation~\eqref{eq_main} results in
\[
x_j(\varepsilon) + \varepsilon \sum_{k \in N} a_{jk} \left[ x_j(\varepsilon) - x_k(\varepsilon) \right] = d_j,
\]
where the second term of the sum on the left-hand side is non-positive and $d_j \ \geq \min \{ d_i: i \in N \}$. The other inequality can be shown analogously.

\item
The first identity, $\lim_{\varepsilon \to 0} \mathbf{x}(\varepsilon) = \mathbf{d}$, is obvious. \\
Let $\lim_{\varepsilon \to \infty} x_j(\varepsilon) = \max \{ \lim_{\varepsilon \to \infty} x_i(\varepsilon): i \in N \}$. If $\lim_{\varepsilon \to \infty} x_j(\varepsilon) > \lim_{\varepsilon \to \infty} x_h(\varepsilon)$ for any $h \in N$, then exists $k,m \in N$ such that $\lim_{\varepsilon \to \infty} x_k(\varepsilon) = \lim_{\varepsilon \to \infty} x_j(\varepsilon) = \max \{ \lim_{\varepsilon \to \infty} x_i(\varepsilon): i \in N \} > \lim_{\varepsilon \to \infty} x_m(\varepsilon)$ and $a_{k m} = 1$ due to the connectedness of $(N,A)$.\footnote{~There exists a path from node $j$ to node $h$ due to connectedness. Along this path one should find such $k,m \in N$, for example, $k = j$ and $m = h$ if $a_{jh} = 1$.}
But $d_k = x_k(\varepsilon) + \varepsilon \sum_{\ell \in N} a_{k \ell} \left[ x_k(\varepsilon) - x_\ell(\varepsilon) \right] \geq \varepsilon \left[ x_k(\varepsilon) - x_m(\varepsilon) \right]$ from formula~\eqref{eq_main}, which is impossible when $\varepsilon \to \infty$.

\item
It can be verified that $x_i(\varepsilon) = d_i$ satisfies $(I + \varepsilon L) \mathbf{x}(\varepsilon) = \mathbf{d}$ if $d_i = d_j$ for all $i,j \in N$. $x_i(\varepsilon) = x_j(\varepsilon)$ for all $i,j \in N$ implies $L \mathbf{x}(\varepsilon) = \mathbf{0}$, so $\mathbf{x}(\varepsilon) = \mathbf{d}$.
\end{enumerate}
\end{proof}

$ANO$ contains \emph{symmetry} \citep{Garg2009}, namely, two symmetric nodes have equal centrality. It provides that all nodes have the same centrality in a complete network, too.

The name of the measure comes from the property degree preservation, it can be perceived as a centrality measure redistributing the sum of degree among the nodes.

$ZP$ and $INRP$ address the issue of disconnected networks. $ZP$ is an extension of the axiom \emph{isolation} \citep{Garg2009}, demanding that an isolated node has zero centrality. $INRP$ shows that the centrality in a component of the network are independent from other components.

Boundedness means that centrality is placed on an interval not broader than in the case of degree. It is easy to prove that the stronger condition of $\min \{ d_i: i \in N \} < x_j(\varepsilon)$ or $x_j(\varepsilon) < \max \{ d_i: i \in N \}$ is also satisfied if node $j$ is at least indirectly connected to a node with a greater or smaller degree, respectively.

According to agreement, the limits of generalized degree are degree and equal centrality on all connected components of a network (because of independence of disconnected parts).

$FP$ means that generalized degree results in a tied centrality between any nodes if and only if degree also gives equal centrality for all nodes. Note that it is true for any fixed $\varepsilon$, so it could not occur that generalized degrees are tied between any two nodes only for certain parameter values.
It also shows that two nodes may have the same generalized degree for any $\varepsilon > 0$ not only if they are symmetric as there exists regular graphs with non-symmetric pairs of nodes.

Degree also satisfies the properties listed in Proposition~\ref{Prop1} (except for agreement).

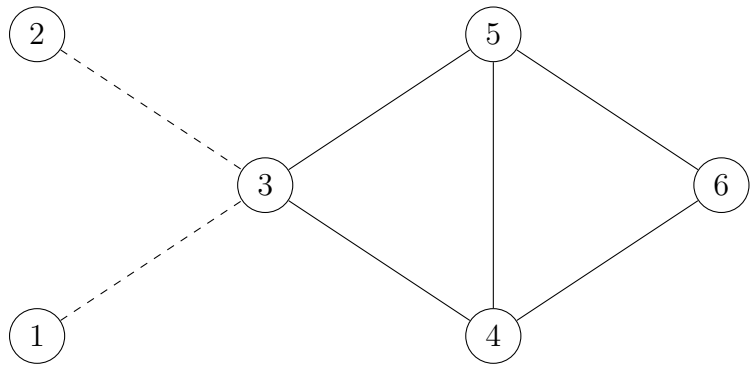
\begin{figure}[htbp]
\centering
\caption{Network of Example~\ref{Examp3}}
\label{Fig3}

\begin{tikzpicture}[scale=1,auto=center, transform shape, >=triangle 45]
\tikzstyle{every node}=[draw,shape=circle];
  \node (n1) at (0,-2) {$1$};
  \node (n2) at (0,2)  {$2$};
  \node (n3) at (3,0)  {$3$};
  \node (n4) at (6,-2) {$4$};
  \node (n5) at (6,2)  {$5$};
  \node (n6) at (9,0)  {$6$};

  \foreach \from/\to in {n3/n4,n3/n5,n4/n5,n4/n6,n5/n6}
    \draw (\from) -- (\to);
    
    \draw[dashed] (n1) -- (n3);
    \draw[dashed] (n2) -- (n3);
\end{tikzpicture}
\end{figure}

\begin{example} \label{Examp3}
Consider the network on Figure~\ref{Fig3} where both normal and dashed lines indicate connections. Generalized degrees with various values of $\varepsilon$ are given on Figure~\ref{Fig4}. Nodes $1$ and $2$, and $4$ and $5$ are symmetric thus the two pairs have the same centrality for any parameter values. Degree gives the ranking $3 \succ (4 \sim 5) \succ 6 \succ (1 \sim 2)$. Generalized degrees of nodes $3$, $4$ and $5$ monotonically decrease, while the generalized degree of nodes $1$ and $2$ increases. However, the centrality of node $6$ is \emph{not monotonic}, for certain $\varepsilon$-s it becomes larger than its limit of $7/3$ (see the property agreement in Proposition~\ref{Prop1}).

It results in two changes in the centrality ranking with the watersheds of $\varepsilon_1 = 1/2$ and $\varepsilon_2 = \left( 2 + \sqrt{6} \right)/2$. In the case of $0 < \varepsilon < \varepsilon_1$ the central node is $3$, while for $\varepsilon > \varepsilon_1$ the suggestion is $4$ and $5$. If $\varepsilon > \varepsilon_2$, then node $6$ becomes more central than node $3$.
\end{example}

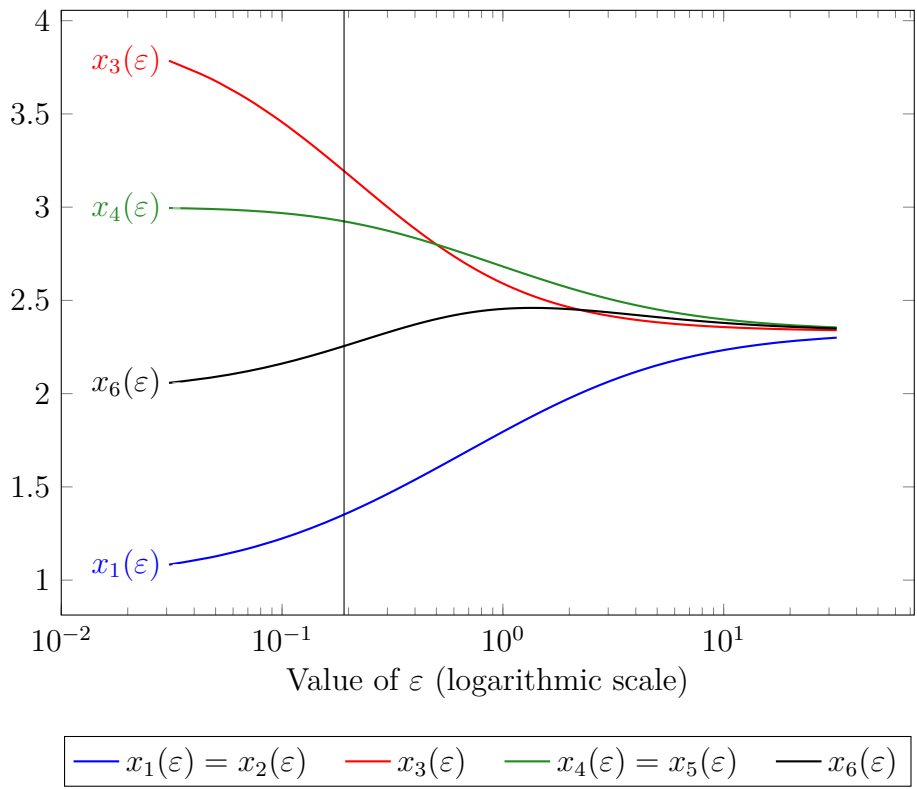
\begin{figure}[htbp]
\centering
\caption{Generalized degrees in Example~\ref{Examp3}}
\label{Fig4}

\begin{tikzpicture}
\begin{axis}[width=0.8\textwidth, 
height=0.6\textwidth,
xmode=log,
xlabel = Value of $\varepsilon$ (logarithmic scale),
xmin=0.01,
legend entries={$x_1(\varepsilon) = x_2(\varepsilon)\quad$,$x_3(\varepsilon)\quad$,$x_4(\varepsilon) = x_5(\varepsilon)\quad$,$x_6(\varepsilon)$},
legend style={at={(0.5,-0.2)},anchor = north,legend columns = 6}
]

\addplot[blue,smooth,thick] coordinates{
(0.0307554190699851,1.08310445176305)
(0.0471132987306222,1.12110904519587)
(0.0693009497745839,1.16734549785650)
(0.0982620958595216,1.22031189499391)
(0.1347726459997020,1.27786656975896)
(0.1793740787340170,1.33768944716062)
(0.2323284650095870,1.39767098705267)
(0.2935988413830830,1.45612938890368)
(0.3628547336948770,1.51186505051214)
(0.4394994556648180,1.56411413422738)
(0.5227137589848350,1.61246183557201)
(0.6115095555042990,1.65675267055398)
(0.7047875933626770,1.69701362435477)
(0.8013938473712240,1.73339364674125)
(0.9001706540012430,1.76611797430205)
(1.0000000000000000,1.79545454545455)
(1.1109004671003400,1.82442892795514)
(1.2478259014344200,1.85591165097083)
(1.4188672011503600,1.88989757007095)
(1.6352974225812700,1.92627935663849)
(1.9130929362603800,1.96481128225796)
(2.2753156735708100,2.00507191390655)
(2.7559238095565100,2.04643365191882)
(3.4060079913435900,2.08805149780877)
(4.3042508801439100,2.12888626744885)
(5.5749415247608800,2.16777520154949)
(7.4199032940424100,2.20355235045033)
(10.1768641433176000,2.23520258616743)
(14.4298166656115000,2.26201350870253)
(21.2254294847334000,2.28368066708303)
(32.5145951588065000,2.30033224704575)
}
node [pos=0,pin={[pin edge={white}, pin distance=-0.2cm] 180:{$x_1(\varepsilon)$}}] {};

\addplot[red,smooth,thick] coordinates{
(0.0307554190699851,3.78521205673314)
(0.0471132987306222,3.69170054396421)
(0.0693009497745839,3.58211035174123)
(0.0982620958595216,3.46239611950373)
(0.1347726459997020,3.33960964601770)
(0.1793740787340170,3.22028836860993)
(0.2323284650095870,3.10934668308183)
(0.2935988413830830,3.00970973259626)
(0.3628547336948770,2.92252613049838)
(0.4394994556648180,2.84765186551776)
(0.5227137589848350,2.78415824693390)
(0.6115095555042990,2.73073861998426)
(0.7047875933626770,2.68598339470670)
(0.8013938473712240,2.64854123509243)
(0.9001706540012430,2.61719878980816)
(1.0000000000000000,2.59090909090909)
(1.1109004671003400,2.56655565521007)
(1.2478259014344200,2.54183398195220)
(1.4188672011503600,2.51708633682055)
(1.6352974225812700,2.49270803428931)
(1.9130929362603800,2.46913141431800)
(2.2753156735708100,2.44680047297248)
(2.7559238095565100,2.42613705601519)
(3.4060079913435900,2.40750215693055)
(4.3042508801439100,2.39115868113565)
(5.5749415247608800,2.37724380249587)
(7.4199032940424100,2.36575828531968)
(10.1768641433176000,2.35657618109534)
(14.4298166656115000,2.34947224348397)
(21.2254294847334000,2.34415909782604)
(32.5145951588065000,2.34032451023386)
}
node [pos=0,pin={[pin edge={white}, pin distance=-0.2cm] 180:{$x_3(\varepsilon)$}}] {};

\addplot[ForestGreen,smooth,thick] coordinates{
(0.0307554190699851,2.99544815154589)
(0.0471132987306222,2.99039786059784)
(0.0693009497745839,2.98183970021050)
(0.0982620958595216,2.96891951664815)
(0.1347726459997020,2.95133644084291)
(0.1793740787340170,2.92946391447147)
(0.2323284650095870,2.90422471748691)
(0.2935988413830830,2.87682179402539)
(0.3628547336948770,2.84846926491738)
(0.4394994556648180,2.82021171090939)
(0.5227137589848350,2.79284552359519)
(0.6115095555042990,2.76691463775371)
(0.7047875933626770,2.74274563354983)
(0.8013938473712240,2.72049619765748)
(0.9001706540012430,2.70020235015975)
(1.0000000000000000,2.68181818181818)
(1.1109004671003400,2.66351011580303)
(1.2478259014344200,2.64347367527892)
(1.4188672011503600,2.62170552765171)
(1.6352974225812700,2.59827499195664)
(1.9130929362603800,2.57334850991122)
(2.2753156735708100,2.54721385870469)
(2.7559238095565100,2.52029865040563)
(3.4060079913435900,2.49317505523721)
(4.3042508801439100,2.46654127779456)
(5.5749415247608800,2.44117220447363)
(7.4199032940424100,2.41783868014376)
(10.1768641433176000,2.39720664160774)
(14.4298166656115000,2.37973940768748)
(21.2254294847334000,2.36563146503179)
(32.5145951588065000,2.35479476581800)
}
node [pos=0,pin={[pin edge={white}, pin distance=-0.2cm] 180:{$x_4(\varepsilon)$}}] {};

\addplot[black,smooth,thick] coordinates{
(0.0307554190699851,2.05768273664899)
(0.0471132987306222,2.08528564444837)
(0.0693009497745839,2.11951925212477)
(0.0982620958595216,2.15914105721215)
(0.1347726459997020,2.20198433277857)
(0.1793740787340170,2.24540490812589)
(0.2323284650095870,2.28686190783901)
(0.2935988413830830,2.32438790154561)
(0.3628547336948770,2.35680523864258)
(0.4394994556648180,2.38369644420869)
(0.5227137589848350,2.40522703473170)
(0.6115095555042990,2.42192676340037)
(0.7047875933626770,2.43449808948409)
(0.8013938473712240,2.44367907611012)
(0.9001706540012430,2.45016056126824)
(1.0000000000000000,2.45454545454545)
(1.1109004671003400,2.45756625727359)
(1.2478259014344200,2.45939536554829)
(1.4188672011503600,2.45970746773412)
(1.6352974225812700,2.45818326852043)
(1.9130929362603800,2.45454900134365)
(2.2753156735708100,2.44862798180503)
(2.7559238095565100,2.44039833933592)
(3.4060079913435900,2.43004473697747)
(4.3042508801439100,2.41798622837752)
(5.5749415247608800,2.40486138545788)
(7.4199032940424100,2.39145965349213)
(10.1768641433176000,2.37860536335432)
(14.4298166656115000,2.36702192373598)
(21.2254294847334000,2.35721663794429)
(32.5145951588065000,2.34942146403872)
}
node [pos=0,pin={[pin edge={white}, pin distance=-0.2cm] 180:{$x_6(\varepsilon)$}}] {};

\draw ({axis cs:0.190892435325843,0}|-{rel axis cs:0,0}) -- ({axis cs:0.190892435325843,0}|-{rel axis cs:0,1});
\end{axis}
\end{tikzpicture}
\end{figure}

\begin{figure}[htbp]
\centering
\caption{Network of Remark~\ref{Rem3}}
\label{Fig2}

\begin{tikzpicture}[scale=1,auto=center, transform shape, >=triangle 45]
\tikzstyle{every node}=[draw,shape=circle];
  \node (n1) at (112.5:3) {$1$};
  \node (n2) at (67.5:3)  {$2$};
  \node (n3) at (22.5:3)  {$3$};
  \node (n4) at (337.5:3) {$4$};
  \node (n5) at (292.5:3) {$5$};
  \node (n6) at (247.5:3) {$6$};
  \node (n7) at (202.5:3) {$7$};
  \node (n8) at (157.5:3) {$8$};

  \foreach \from/\to in {n1/n2,n1/n8,n2/n3,n3/n4,n4/n5,n5/n6,n6/n7,n7/n8}
    \draw (\from) -- (\to);
  
  \foreach \from/\to in {n1/n4,n2/n5,n3/n7,n6/n8}
    \draw (\from) -- (\to);
\end{tikzpicture}
\end{figure}
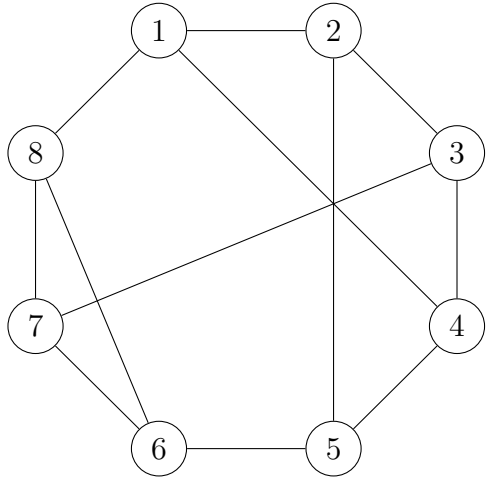

\begin{remark} \label{Rem3}
According to \citet{ChebotarevShamis1997b_eng}, the diagonal entries $q_{ii}$ of the matrix $Q = (I + \alpha L)^{-1}$, $\alpha > 0$ may serve as a measure of \emph{solitariness}, that is, an 'inverse' centrality index.

Despite its similar formula, this measure is not concordant with generalized degree. Consider the network on Figure~\ref{Fig2}, which is a $3$-regular graph with $8$ vertices. According to the property $FP$ of Proposition~\ref{Prop1}, generalized degree gives the ranking $(1 \sim 2 \sim \dots \sim 8)$, while $1-q_{ii}$  results in $(1 \sim 3 \sim 5) \succ (2 \sim 4) \succ (6 \sim 7 \sim 8)$. The latter method 'punishes' nodes $2$ and $4$ because they have a difficulty in reaching $6$, $7$ and $8$, and vice versa. However, it is not clear whether this differentiation is desirable in a regular graph.
\end{remark}

\section{An axiomatic review of generalized degree} \label{Sec3}

The interpretation of generalized degree on the network gives few information about the appropriate value of parameter $\varepsilon$. Here we use the axiomatic approach in order to get an insight into it.

\citet{Freeman1979} states that all centrality measures have an implicit starting point: the central node of a star is the most central possible position.

\begin{definition} \label{Def5}
\emph{Star center base} ($SCB$):
Let $(N,A) \in \mathcal{N}^n$ be a star network with $i \in N$ at the center, that is, $a_{ij} = 1$ for all $j \neq i$ and $a_{jk} = 0$ for all $j,k \in N \setminus \{ i \}$.
Centrality measure $f: \mathcal{N}^n \to \mathbb{R}^n$ is \emph{star center based} if $f_i(N,A) > f_j(N,A)$ for all $j \in N \setminus \{ i \}$.
\end{definition}

The following requirement capture the essence of 'dynamic' monotonicity with respect to the centrality ranking.

\begin{definition} \label{Def6}
\emph{Adding rank monotonicity} ($ARM$):
Let $(N,A),(N,A') \in \mathcal{N}^n$ be two networks and $i,j \in N$ be two distinct nodes such that $A$ and $A'$ are identical but $a_{ij} = 0$ and $a'_{ij} = 1$.
Centrality measure $f: \mathcal{N}^n \to \mathbb{R}^n$ is \emph{adding rank monotonic} if $f_i(N,A) \geq f_k(N,A) \Rightarrow f_i(N,A') \geq f_k(N,A')$ and $f_i(N,A) > f_k(N,A) \Rightarrow f_i(N,A') > f_k(N,A')$ for all $k \in N \setminus \{ i,j \}$.
\end{definition}



$ARM$ implies that, other nodes cannot benefit more from adding an edge between $i$ and $j$ than the nodes involved with respect to the centrality ranking. The supplementary condition excludes the possibility that node $i$ is more central than another node in the network $(N,A)$ but they are tied in $(N,A')$.

A version of $ARM$, focusing on the nodes with the highest centrality, appeared in \citet{Sabidussi1966} and \citet{Nieminen1974}.
According to our knowledge, in this form $ARM$ was first introduced by \citet{Chienetal2004} for directed graphs. \citet{BoldiVigna2014} argue for the use of rank monotonicity in order to exclude pathological, counter-intuitive changes in centrality. However, they leave the study of such an axiom for future work.
A review of other monotonicity axioms can be found in \citet{LandherrFriedlHeidemann2010} and \citet{BoldiVigna2014}.

Note that the centrality measure associating a constant value to every node of every network satisfies $ARM$ but does not meet $SCB$. Degree is star center based and adding rank monotonic.

\begin{proposition} \label{Prop2}
Generalized degree satisfies $SCB$.
\end{proposition}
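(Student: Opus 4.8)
The plan is to exploit the high symmetry of the star. First I would invoke anonymity ($ANO$ in Proposition~\ref{Prop1}): since the $n-1$ leaves of a star are pairwise symmetric, they all receive the same generalized degree, which I denote $x_\ell(\varepsilon)$, and I write $x_c(\varepsilon)$ for the value at the centre. This collapses the defining system $(I + \varepsilon L)\mathbf{x}(\varepsilon) = \mathbf{d}$ to just two scalar unknowns, so that $SCB$ reduces to the single strict inequality $x_c(\varepsilon) > x_\ell(\varepsilon)$.

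Next I would write down two relations in these unknowns. Applying formula~\eqref{eq_main} to any leaf (degree $1$, adjacent only to the centre) yields $1 = (1+\varepsilon) x_\ell(\varepsilon) - \varepsilon x_c(\varepsilon)$, while degree preservation (Proposition~\ref{Prop1}) gives $x_c(\varepsilon) + (n-1) x_\ell(\varepsilon) = \sum_{i \in N} d_i = 2(n-1)$. Solving this $2 \times 2$ linear system is routine; the quantity that matters is the difference, which works out to $x_c(\varepsilon) - x_\ell(\varepsilon) = (n-2)/(1 + n\varepsilon)$. Since $\varepsilon > 0$, this is strictly positive exactly when $n \geq 3$, i.e. for every genuine star, and the required conclusion $x_c(\varepsilon) > x_j(\varepsilon)$ for all leaves $j$ follows immediately.

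I do not anticipate a serious obstacle, as the symmetry reduction renders the computation elementary; the main thing to do carefully is the bookkeeping leading to the clean form of the difference. The one point demanding a caveat is the degenerate case $n = 2$: there the ``star'' is a single edge with two symmetric endpoints, the centre is not well-defined, and indeed the formula returns $x_c(\varepsilon) - x_\ell(\varepsilon) = 0$, so the strict statement can only be asserted for stars on at least three nodes (which I would flag explicitly, or absorb into the implicit convention that a star carries a distinguished centre). As a sanity check I would confirm consistency with the agreement property: as $\varepsilon \to \infty$ both $x_c(\varepsilon)$ and $x_\ell(\varepsilon)$ tend to $\sum_{i \in N} d_i / n = 2(n-1)/n$, in line with the difference $(n-2)/(1+n\varepsilon)$ vanishing in the limit.
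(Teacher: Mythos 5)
Your proof is correct and follows essentially the same route as the paper: reduce to two unknowns by anonymity, solve a $2\times 2$ linear system, and obtain $x_c(\varepsilon)-x_\ell(\varepsilon)=(n-2)/(1+\varepsilon n)>0$. The only (immaterial) difference is that you pair the leaf equation with degree preservation rather than with the centre's own equation from~\eqref{eq_main} -- these are equivalent, since summing all node equations yields degree preservation -- and your explicit caveat about $n\geq 3$ matches the footnote in the paper's proof.
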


\begin{proof}
Let $x_\ell = x_\ell(\varepsilon)(N,A)$ for all $\ell \in N$.
Due to anonymity (see Proposition~\ref{Prop1}), $x_j = x_k$ for any $j,k \in N \setminus \{ i \}$. Formula \eqref{eq_main} gives two conditions for the two variables:
\begin{eqnarray*}
\left[ 1 + \varepsilon (n-1) \right] x_i - \varepsilon (n-1) x_j & = & n-1 \\
\left( 1 + \varepsilon \right) x_i - \varepsilon x_j & = & 1. \\
\end{eqnarray*}
It can be checked that the solution is
\[
x_i = \frac{(n-1)(1 + 2 \varepsilon)}{1 + \varepsilon n} \qquad \text{and} \qquad
x_j = \frac{1 + \varepsilon (2n-1) + 2 \varepsilon^2 (n-1)}{(1 + \varepsilon) (1 + \varepsilon n)},
\]
hence
\[
x_i - x_j = \frac{n - 2}{1 + \varepsilon n} > 0.\footnote{~$n \geq 3$ is necessary to get a meaningful start network.}
\]
\end{proof}

$SCB$ means only a validity test, it is a 'natural' requirement for any centrality measure.

Now a sufficient condition is given for generalized degree to be adding rank monotonic. Let $\mathfrak{d} = \max \{ d_i: i \in N \}$ be the \emph{maximal} and $\mathsf{d} = \min \{ d_k: k \in N \}$ be the \emph{minimal degree}, respectively, in a network $(N,A) \in \mathcal{N}$.

\begin{theorem} \label{Theo1}
Generalized degree satisfies $ARM$ if
\[
(\mathfrak{d} - \mathsf{d}) \left[ (2\mathfrak{d} + 4) \varepsilon^3 + 2 \varepsilon^2 + \varepsilon \right] \leq 1.
\]
\end{theorem}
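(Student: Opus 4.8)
The plan is to study the difference vector $\mathbf{y} = \mathbf{x}(\varepsilon)(N,A') - \mathbf{x}(\varepsilon)(N,A)$ and to reduce $ARM$ to the inequalities $y_i \ge y_k$ for every $k \in N \setminus \{i,j\}$. This reduction is exact: since $x_i'(\varepsilon) - x_k'(\varepsilon) = [x_i(\varepsilon) - x_k(\varepsilon)] + (y_i - y_k)$, a proof that $y_i \ge y_k$ delivers at once both the weak and the strict implication of Definition~\ref{Def6}. To get a tractable formula for $\mathbf{y}$, I would difference the defining system $(I + \varepsilon L)\mathbf{x} = \mathbf{d}$ across the two networks. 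Writing $\mathbf{u}_\ell$ for the $\ell$-th standard unit vector and $\mathbf{v} = \mathbf{u}_i - \mathbf{u}_j$, adding the edge $ij$ turns $L$ into $L + \mathbf{v}\mathbf{v}^\top$ and $\mathbf{d}$ into $\mathbf{d} + \mathbf{u}_i + \mathbf{u}_j$; hence, with $\delta = x_i'(\varepsilon) - x_j'(\varepsilon)$ and $G = (I + \varepsilon L)^{-1}$, one finds (equivalently by the Sherman--Morrison formula) $(I + \varepsilon L)\mathbf{y} = (1 - \varepsilon\delta)\mathbf{u}_i + (1 + \varepsilon\delta)\mathbf{u}_j$, so that $y_\ell = (1 - \varepsilon\delta)G_{i\ell} + (1 + \varepsilon\delta)G_{j\ell}$. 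This exhibits the perturbation as the superposition of a source at $i$ and a source at $j$.

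I would then record the structural properties of $G$ that carry the argument. As $I + \varepsilon L$ has positive diagonal, nonpositive off-diagonal entries and unit row sums (because $(I+\varepsilon L)\mathbf{e} = \mathbf{e}$), it is a nonsingular $M$-matrix; therefore $G \ge 0$ entrywise and $G\mathbf{e} = \mathbf{e}$, i.e. $G$ is symmetric doubly stochastic. The rowwise identities $(1 + \varepsilon d_\ell)G_{\ell m} = \varepsilon \sum_{p} a_{\ell p} G_{pm}$ for $\ell \ne m$ and $(1 + \varepsilon d_\ell)G_{\ell \ell} = 1 + \varepsilon \sum_{p} a_{\ell p} G_{p\ell}$ give a discrete maximum principle, yielding $G_{mm} = \max_\ell G_{\ell m}$ together with the quantitative bounds $G_{mm} \ge 1/(1 + \varepsilon\mathfrak{d})$ and $G_{\ell m} \le \tfrac{\varepsilon\mathfrak{d}}{1+\varepsilon\mathfrak{d}} G_{mm}$ for $\ell \ne m$.

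The same maximum principle applied to $\mathbf{y}$ does most of the work. For $k \notin \{i,j\}$ the source vanishes, so $y_k (1 + \varepsilon d_k) = \varepsilon \sum_m a_{km} y_m$ makes $y_k$ a weighted average of its neighbours with total weight below $1$; provided $\varepsilon|\delta| \le 1$ (so that the right-hand side of the system is nonnegative and hence $\mathbf{y} = G\big[(1-\varepsilon\delta)\mathbf{u}_i + (1+\varepsilon\delta)\mathbf{u}_j\big] \ge 0$), the maximum of $\mathbf{y}$ is attained at $i$ or $j$ and $y_k \le \tfrac{\varepsilon\mathfrak{d}}{1+\varepsilon\mathfrak{d}} \max\{y_i, y_j\}$. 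If that maximum is $y_i$, then $y_i \ge y_k$ is immediate; the only surviving case is $y_j > y_i$, in which $y_i \ge y_k$ follows once $y_i \ge \tfrac{\varepsilon\mathfrak{d}}{1+\varepsilon\mathfrak{d}} y_j$, i.e. once $y_i \ge \varepsilon\mathfrak{d}\,(y_j - y_i)$.

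The last and hardest step is to force this scalar inequality from the hypothesis. Feeding $y_m \le y_j$ into node $j$'s equation gives the clean bound $y_j \le 1 + \varepsilon\delta$, and node $i$'s equation supplies a matching lower bound on $y_i$; the gap is controlled through the identity $y_j - y_i = \big[(x_i - x_j)(t-1) - (G_{ii} - G_{jj})\big]/t$ with $t = 1 + \varepsilon(G_{ii} - 2G_{ij} + G_{jj}) \in [1, 1 + 2\varepsilon]$. Boundedness from Proposition~\ref{Prop1} gives $|x_i - x_j| \le \mathfrak{d} - \mathsf{d}$, which is the sole origin of the factor $\mathfrak{d} - \mathsf{d}$, while the diagonal and off-diagonal differences of $G$ are absorbed into the polynomial in $\varepsilon$ via the $M$-matrix bounds. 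I expect the main obstacle to be exactly this bookkeeping: the adverse part of the perturbation is the $j$-source term $(1 + \varepsilon\delta)(G_{ij} - G_{jk})$, which is negative precisely when $k$ sits closer to $j$ than $i$ does, and one must show that the nonnegative $i$-source advantage $(1 - \varepsilon\delta)(G_{ii} - G_{ik})$ dominates it once every power of $\varepsilon$ and the spread $\mathfrak{d} - \mathsf{d}$ are accounted for, collapsing the requirement into $(\mathfrak{d} - \mathsf{d})\big[(2\mathfrak{d} + 4)\varepsilon^3 + 2\varepsilon^2 + \varepsilon\big] \le 1$. Finally I would check that this same inequality already guarantees $\varepsilon|\delta| \le 1$, which closes the nonnegativity assumption used in the maximum-principle step.
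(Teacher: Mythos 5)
Your overall strategy --- difference the system $(I+\varepsilon L)\mathbf{x}=\mathbf{d}$ across the two networks, observe that $ARM$ reduces to $y_i \ge y_k$, and run a maximum principle on the difference vector --- is exactly the skeleton of the paper's proof, which writes the same thing node by node with $s,t,u,v$ denoting the increments of $i$, of $j$, and of the maximal and minimal other nodes. The reduction to $y_i\ge y_k$, the formula $(I+\varepsilon L)\mathbf{y}=(1-\varepsilon\delta)\mathbf{u}_i+(1+\varepsilon\delta)\mathbf{u}_j$, the Sherman--Morrison identity for $y_j-y_i$, and the identification of boundedness as the sole source of the factor $\mathfrak{d}-\mathsf{d}$ are all correct. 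The difficulty is that the quantitative step you defer to ``bookkeeping'' is the entire content of the theorem, and the bounds you propose to carry it out are demonstrably too weak to produce the stated condition.

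Two concrete failures. First, the bound $y_k \le \frac{\varepsilon\mathfrak{d}}{1+\varepsilon\mathfrak{d}}\max\{y_i,y_j\}$ discards the structural fact the paper exploits: for the node $k$ that \emph{maximizes} $y_k$ over $N\setminus\{i,j\}$, every neighbour $\ell\notin\{i,j\}$ has $y_\ell\le y_k$ and therefore pulls $y_k$ \emph{down}, so only the single possible edge to $i$ (weight $a_{ki}\in\{0,1\}$) can push it up; this yields the paper's inequality \eqref{eq1}, $u\le\frac{\varepsilon a_{ki}}{1+\varepsilon a_{ki}}s$, whose coefficient is at most $\frac{\varepsilon}{1+\varepsilon}$, independent of $\mathfrak{d}$. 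Second, in your hard case you must show $y_i\ge \varepsilon\mathfrak{d}\,(y_j-y_i)$ and you control $y_j-y_i$ through $\bigl[(x_i-x_j)(t-1)-(G_{ii}-G_{jj})\bigr]/t$; but $G_{ii}-G_{jj}$ is not proportional to $\mathfrak{d}-\mathsf{d}$ (it is generically nonzero even in a regular, non-vertex-transitive graph), and the only estimates you state give $|G_{ii}-G_{jj}|\le \varepsilon\mathfrak{d}/(1+\varepsilon\mathfrak{d})$, so your final inequality acquires a residual term of order $\varepsilon^2\mathfrak{d}^2/(1+\varepsilon\mathfrak{d})$ that does \emph{not} vanish when $\mathfrak{d}=\mathsf{d}$. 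Since the theorem's hypothesis is vacuous for regular networks, so $ARM$ must hold there for \emph{every} $\varepsilon>0$, any route whose error is not multiplied by $\mathfrak{d}-\mathsf{d}$ cannot close. The paper avoids this by never touching $G_{ii}-G_{jj}$: it bounds the increment of $j$ from \emph{below} using the minimal increment $v$ and the edge indicators $a_{mi},a_{mj}$ (inequalities \eqref{eq4}--\eqref{eq5}), pairs this against the upper bound \eqref{eq3} on $u$, and lets the contradiction hinge entirely on $\varepsilon(x_j'-x_i')\le\varepsilon(\mathfrak{d}-\mathsf{d})$. You would need to replace the $G$-based estimates by this two-sided extremal argument (or find a genuinely sharper bound on $G_{ii}-G_{jj}$) before the condition $(\mathfrak{d}-\mathsf{d})\left[(2\mathfrak{d}+4)\varepsilon^3+2\varepsilon^2+\varepsilon\right]\le 1$ can emerge.
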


The proof of this statement is not elegant, therefore one can see it in the Appendix.

In a certain sense, Theorem~\ref{Theo1} is not surprising because degree satisfies $ARM$ and $\mathbf{x}(\varepsilon)$ is close to it when $\varepsilon$ is small.\footnote{~There exists an appropriately small $\varepsilon$ satisfying the condition of Theorem~\ref{Theo1} for any $\mathsf{d}$ and $\mathfrak{d}$.} Our main contribution is the calculation of a sufficient condition. Note that it depends only on the minimal and maximal degree as well as on $\varepsilon$ but not on the the number of nodes. Nevertheless, since the network graph is unweighted, $\mathfrak{d} - \mathsf{d} \leq n-1$, therefore the condition $(n-1) \left[ (2\mathfrak{d} + 4) \varepsilon^3 + 2 \varepsilon^2 + \varepsilon \right] \leq 1$ provides $ARM$, too.

Parameter $\varepsilon$ is called \emph{reasonable} if it satisfies $(\mathfrak{d} - \mathsf{d}) \left[ (2\mathfrak{d} + 4) \varepsilon^3 + 2 \varepsilon^2 + \varepsilon \right] \leq 1$ for the network $(N,A) \in \mathcal{N}$.
It is always satisfied for a regular network where $\mathfrak{d} = \mathsf{d}$. The value of $\varepsilon$ is decreasing in the maximal degree and, especially, in the difference of maximal and minimal degree. However, it does not become extremely small for sparse networks, where the difference of maximal and minimal degrees can be significantly smaller than the number of nodes.


Note the analogy to the (dynamic) monotonicity of generalized row sum method \citep[Property~13]{Chebotarev1994}.


Similarly to \citet{Sabidussi1966}, strict inequalities can be demanded in $ARM$ (i.e. adding an edge between nodes $i$ and $j$ eliminates all ties -- except for against $j$ -- with $i$ in the centrality ranking) but it does not affect our discussion, all results remain valid with a corresponding modification of inequalities.

According to the following example, violation of $ARM$ can be a problem in practice.

\begin{example} \label{Examp4}
Consider the networks $(N,A),(N,A') \in \mathcal{N}^6$ on Figure~\ref{Fig3} such that $(N,A)$ is given by the normal edges and $(N,A')$ is obtained from $(N,A)$ by adding the two dashed edges between nodes $1$ and $3$, and nodes $2$ and $3$.

Axiom $ARM$ demands that $x_i(\varepsilon)(N,A) \geq x_k(\varepsilon)(N,A) \Rightarrow x_i(\varepsilon)(N,A') \geq x_k(\varepsilon)(N,A')$ and $x_i(\varepsilon)(N,A) > x_k(\varepsilon)(N,A) \Rightarrow x_i(\varepsilon)(N,A') > x_k(\varepsilon)(N,A')$ for all $i = 1,2,3$ and $j = 4,5,6$. However, $x_3(\varepsilon=3)(N,A) = x_6(\varepsilon=3)(N,A)$ since nodes $3$ and $6$ are symmetric in $(N,A)$ but $x_3(\varepsilon=3)(N,A') < x_6(\varepsilon=3)(N,A')$ as can seen on Figure~\ref{Fig4}.
It is difficult to argue for this ranking since nodes $1$ and $2$ are only connected to node $3$.\footnote{~Nevertheless, the lower rank of node $3$ may be explained. \citet{LandherrFriedlHeidemann2010} mention cannibalization and saturation effects, which sometimes arise when an actor can devote less time to maintaining existing relationships as a result of adding new contacts. Similarly, the edges can represent not only opportunities, but liabilities, too. For example, a service provider may have legal constraints to serve unprofitable customers. Investigation of these models is leaved for future research.}

The root of the problem is the excessive influence of neighbours' degrees: the low values of $d_1$ and $d_2$ decrease generalized degree of node $3$ despite its degree becomes greater. For large $\varepsilon$-s this effect is responsible for breaking of the property $ARM$. 

Theorem~\ref{Theo1} gives the condition of reasonableness as
\[
30 \varepsilon^3 + 6 \varepsilon^2 + 3 \varepsilon \leq 1,
\]
because $\mathfrak{d} = d_3 = d_4 = d_5 = 3$ and $\mathsf{d} = d_1 = 0$ before connecting nodes $1$ and $3$ ($ARM$ allows for adding only one edge). It is satisfied if
\[
\varepsilon \leq \frac{1}{15} \left( \sqrt[3]{\frac{266 + 15 \sqrt{334}}{4}} - \frac{13}{\sqrt[3]{532 + 30 \sqrt{334}}} - 1 \right) \approx 0.1909.
\]
This upper bound of $\varepsilon$ is indicated by the vertical line on Figure~\ref{Fig4}.
Note that $ARM$ is violated only if $\varepsilon > \left(2 + \sqrt{6} \right) / 2$ according to Example~\ref{Examp3}, Theorem~\ref{Theo1} does not give a necessary condition for $ARM$.
\end{example}

It is worth to scrutinize the connection of degree and generalized degree further. Example~\ref{Examp3} verifies that generalized degree (with an appropriate value of $\varepsilon$) is not only a tie-breaking rule of degree, there exists networks where a node with a smaller degree has a larger centrality by $\mathbf{x}(\varepsilon)$.

A special type of networks is provided by a given degree sequence. Then degree of nodes is fixed along with the reasonableness of $\varepsilon$ (depending only on degrees), while generalized degree may result in different rankings.

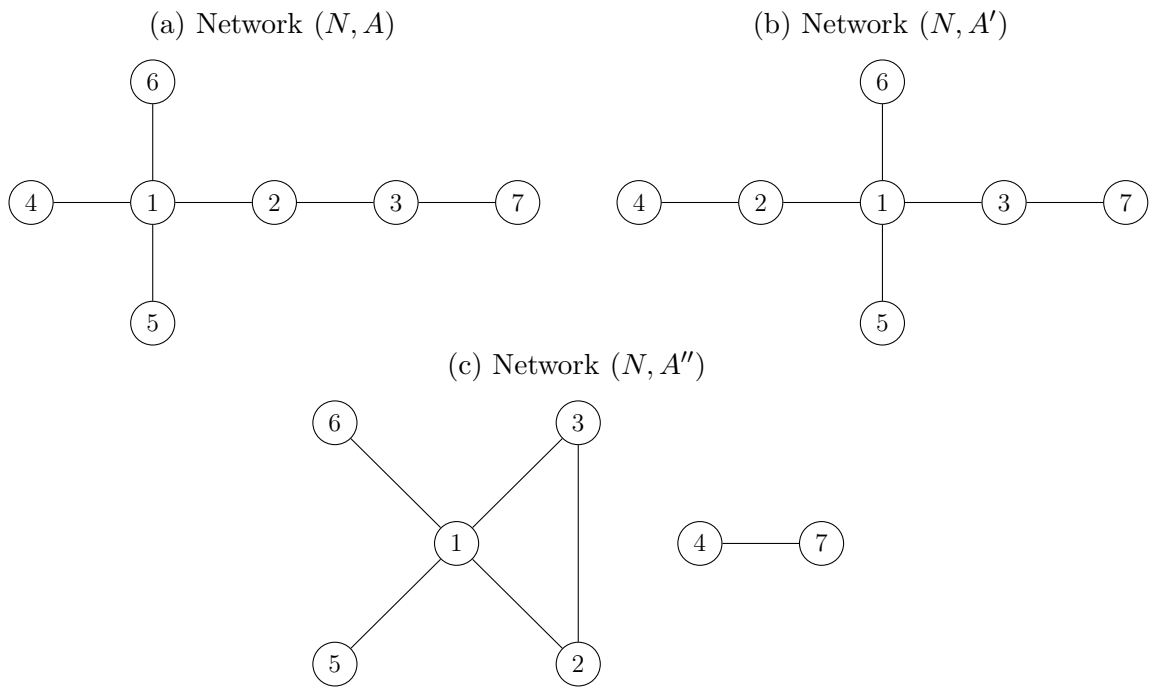
\begin{figure}[htbp]
\centering
\caption{Networks of Example~\ref{Examp5}}
\label{Fig5}
  
\begin{subfigure}{.5\textwidth}
  \centering
  \subcaption{Network $(N,A)$}
  \label{Fig5a}
\begin{tikzpicture}[scale=0.8, auto=center, transform shape, >=triangle 45]
\tikzstyle{every node}=[draw,shape=circle]; 
  \node (n1) at (2,0)  {$1$};
  \node (n2) at (4,0)  {$2$};
  \node (n3) at (6,0)  {$3$};
  \node (n4) at (0,0)  {$4$};
  \node (n5) at (2,-2) {$5$};
  \node (n6) at (2,2)  {$6$};
  \node (n7) at (8,0)  {$7$};

  \foreach \from/\to in {n1/n2,n1/n4,n1/n5,n1/n6,n2/n3,n3/n7}
    \draw (\from) -- (\to);
\end{tikzpicture}
\end{subfigure}
\begin{subfigure}{.5\textwidth}
  \centering
  \subcaption{Network $(N,A')$}
  \label{Fig5b}
\begin{tikzpicture}[scale=0.8, auto=center, transform shape, >=triangle 45]
\tikzstyle{every node}=[draw,shape=circle];
  \node (n1) at (4,0)  {$1$};
  \node (n2) at (2,0)  {$2$};
  \node (n3) at (6,0)  {$3$};
  \node (n4) at (0,0)  {$4$};
  \node (n5) at (4,-2) {$5$};
  \node (n6) at (4,2)  {$6$};
  \node (n7) at (8,0)  {$7$};

  \foreach \from/\to in {n1/n2,n1/n3,n1/n5,n1/n6,n2/n4,n3/n7}
    \draw (\from) -- (\to);
\end{tikzpicture}
\end{subfigure}
\begin{subfigure}{\textwidth}
  \centering
  \subcaption{Network $(N,A'')$}
  \label{Fig5c}
\begin{tikzpicture}[scale=0.8, auto=center, transform shape, >=triangle 45]
\tikzstyle{every node}=[draw,shape=circle];
  \node (n1) at (2,0)  {$1$};
  \node (n2) at (4,-2) {$2$};
  \node (n3) at (4,2)  {$3$};
  \node (n4) at (6,0)  {$4$};
  \node (n5) at (0,-2) {$5$};
  \node (n6) at (0,2)  {$6$};
  \node (n7) at (8,0)  {$7$};

  \foreach \from/\to in {n1/n2,n1/n3,n1/n5,n1/n6,n2/n3,n4/n7}
    \draw (\from) -- (\to);
\end{tikzpicture}
\end{subfigure}
\end{figure}

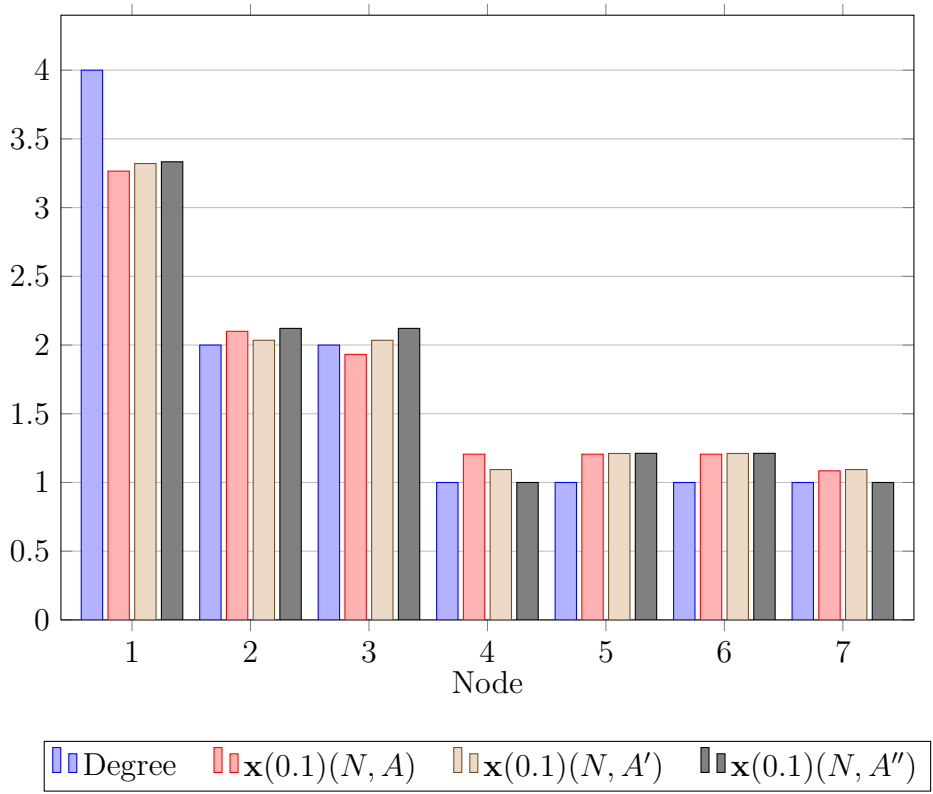
\begin{figure}[htbp]
\centering
\caption{Centralities in Example~\ref{Examp5}}
\label{Fig6}

\begin{tikzpicture}
\begin{axis}[width=0.8\textwidth, 
height=0.6\textwidth,
symbolic x coords={1,2,3,4,5,6,7},
xtick=data,
xlabel = Node,
ybar,
ymin = 0,
ymajorgrids = true,
bar width=8pt,
legend entries={Degree$\quad$,{$\mathbf{x}(0.1)(N,A)\quad$},{$\mathbf{x}(0.1)(N,A')\quad$},{$\mathbf{x}(0.1)(N,A'')$}},
legend style={at={(0.5,-0.2)},anchor = north,legend columns = 6},
]

\addplot coordinates{
(1,4)
(2,2)
(3,2)
(4,1)
(5,1)
(6,1)
(7,1)
};

\addplot coordinates{
(1,3.26554829897127)
(2,2.09979937678747)
(3,1.93204422247834)
(4,1.20595893627012)
(5,1.20595893627012)
(6,1.20595893627012)
(7,1.08473129295258)
};

\addplot coordinates{
(1,3.32079308591764)
(2,2.03457041179461)
(3,2.03457041179461)
(4,1.09405185561769)
(5,1.21098118962888)
(6,1.21098118962888)
(7,1.09405185561769)
};

\addplot coordinates{
(1,3.33333333333333)
(2,2.12121212121212)
(3,2.12121212121212)
(4,1.00000000000000)
(5,1.21212121212121)
(6,1.21212121212121)
(7,1.00000000000000)
};
\end{axis}
\end{tikzpicture}
\end{figure}

\begin{example} \label{Examp5}
Consider the degree sequence $\mathbf{d} = \left[ 4,\, 2,\, 2,\, 1,\, 1,\, 1,\, 1 \right]^\top$. There exist three networks with this attribute up to isomorphism, depicted on Figure~\ref{Fig5}.\footnote{~See the proof at \url{http://www.math.unm.edu/~loring/links/graph_s09/degreeSeq.pdf}.} 
It can be checked that $\varepsilon = 0.1$ is reasonable. Generalized degrees with this value are presented on Figure~\ref{Fig6}.

Degree provides the centrality ranking $1 \succ (2 \sim 3) \succ (4 \sim 5 \sim 6 \sim 7)$ in all networks. \\
For $(N,A)$ (Figure~\ref{Fig5a}), generalized degree results in $1 \succ 2 \succ 3 \succ (4 \sim 5 \sim 6) \succ 7$ as nodes $4$, $5$ and $6$ are symmetric, they are closer to the center $1$ than $7$, which is also the case with nodes $2$ and $3$. \\
For $(N,A')$ (Figure~\ref{Fig5b}), generalized degree gives the ranking $1 \succ (2 \sim 3) \succ (5 \sim 6) \succ (4 \sim 7)$ as nodes $2$ and $3$, $4$ and $7$, and $5$ and $6$ are symmetric, and the last pair is closer to the center than the middle pair. \\
For $(N,A'')$ (Figure~\ref{Fig5c}), generalized degree results in the ranking $1 \succ (2 \sim 3) \succ (5 \sim 6) \succ (4 \sim 7)$ as nodes $2$ and $3$, $4$ and $7$, and $5$ and $6$ are symmetric, and the last pair is connected to the center contrary to the middle pair.

Note that generalized degree is only a tie-breaking rule of degree but all changes can be justified. 
Centrality values may also have a meaning: $x_1(0.1)(N,A') > x_1(0.1)(N,A)$ as node $1$ can more easily communicate with node $7$ in the network $(N,A')$, for example. It is also remarkable that $x_2(0.1)(N,A) > x_2(0.1)(N,A')$ and $x_5(0.1)(N,A) < x_5(0.1)(N,A') < x_5(0.1)(N,A'')$ (nodes $5$ and $6$ are symmetric in all cases).
\end{example}

\section{An interpretation of the measure} \label{Sec4}

In this section we present the meaning of the proposed measure on the network.
Let $C \in \mathbb{R}^{n \times n}$ be a matrix such that $c_{ij} = a_{ij}$ for all $i \neq j$ and $c_{ii} = \mathfrak{d} - d_i$ for all $i = 1,2, \dots ,n$. $C$ is a modified adjacency matrix with equal row sums, its diagonal elements are nonnegative but at least one of them is zero.
In other words, $C = \mathfrak{d} I - L$ and $\mathbf{x}(\varepsilon) = \left[ I + \varepsilon \left( \mathfrak{d} I - C \right) \right]^{-1} \mathbf{d}$. Let introduce the notation
\[
\beta = \frac{\varepsilon}{1 + \varepsilon \mathfrak{d}}.
\]
It gives 
\[
\mathbf{x}(\varepsilon) = \frac{1}{1 + \varepsilon \mathfrak{d}} \left( I - \beta C \right)^{-1} \mathbf{d} = \left( 1 - \beta \mathfrak{d} \right) \left( I - \beta C \right)^{-1} \mathbf{d}.
\]
Matrix $\left( I - \beta C \right)^{-1}$ can be written as a limit of an infinite sequence according to its Neumann series \citep{Neumann1877} if all eigenvalues of $\beta C$ are in the interior of the unit circle \citep[p.~618]{Meyer2000}.

\begin{proposition} \label{Prop4}
Let $(N,A) \in \mathcal{N}^n$ be a network. Then
\[
\mathbf{x}(\varepsilon) = \left( 1 - \beta \mathfrak{d} \right) \sum_{k=0}^{\infty} \left( \beta C \right)^k \mathbf{d} = \left( 1 - \beta \mathfrak{d} \right) \left( \mathbf{d} + \beta C \mathbf{d} + \beta^2 C^2 \mathbf{d} + \beta^3 C^3 \mathbf{d} + \dots \right).
\]
\end{proposition}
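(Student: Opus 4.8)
The plan is to reduce the statement entirely to a spectral-radius estimate. The identity $\mathbf{x}(\varepsilon) = \left( 1 - \beta \mathfrak{d} \right) \left( I - \beta C \right)^{-1} \mathbf{d}$ has already been derived in the text preceding the proposition, and the quoted Neumann series result \citep[p.~618]{Meyer2000} tells us that $\left( I - \beta C \right)^{-1} = \sum_{k=0}^{\infty} \left( \beta C \right)^k$ holds exactly when every eigenvalue of $\beta C$ lies strictly inside the unit circle, i.e. when the spectral radius $\rho(\beta C) < 1$. Thus the only thing left to prove is this inequality; once it is in hand, substituting the series for $\left( I - \beta C \right)^{-1}$ and factoring out the scalar $\left( 1 - \beta \mathfrak{d} \right)$ yields both displayed expressions at once.

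First I would bound the eigenvalues of $C$ itself. Recall that $C = \mathfrak{d} I - L$ is a real symmetric matrix whose $i$-th row has diagonal entry $c_{ii} = \mathfrak{d} - d_i \geq 0$ and off-diagonal absolute row sum $\sum_{j \neq i} \lvert c_{ij} \rvert = \sum_{j \neq i} a_{ij} = d_i$. Applying the Ger\v{s}gorin disc theorem, every eigenvalue of $C$ lies in some disc centred at the nonnegative real number $\mathfrak{d} - d_i$ with radius $d_i$; since $\mathfrak{d} - d_i \geq 0$, any point $z$ of that disc satisfies $\lvert z \rvert \leq (\mathfrak{d} - d_i) + d_i = \mathfrak{d}$. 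Hence $\rho(C) \leq \mathfrak{d}$. Equivalently, since the Laplacian eigenvalues obey $0 \leq \mu \leq 2\mathfrak{d}$, the eigenvalues $\mathfrak{d} - \mu$ of the symmetric matrix $C$ all lie in $[-\mathfrak{d}, \mathfrak{d}]$, giving the same bound.

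Because $\varepsilon > 0$ makes $\beta = \varepsilon / \left( 1 + \varepsilon \mathfrak{d} \right) > 0$, I would then conclude
\[
\rho(\beta C) = \beta \, \rho(C) \leq \beta \mathfrak{d} = \frac{\varepsilon \mathfrak{d}}{1 + \varepsilon \mathfrak{d}} < 1 ,
\]
the strict inequality being the crucial point; I would also note that $1 - \beta \mathfrak{d} = 1/\left( 1 + \varepsilon \mathfrak{d} \right) > 0$, so the leading scalar is positive and well defined. This licenses the Neumann expansion and completes the proof, the edgeless network ($\mathfrak{d} = 0$, hence $\beta = 0$) being a degenerate case in which the series collapses to $\mathbf{x}(\varepsilon) = \mathbf{d}$ and the claim is trivial. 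The main obstacle is precisely the bound $\rho(C) \leq \mathfrak{d}$ together with confirming that scaling by $\beta$ keeps the spectral radius strictly below $1$; beyond securing this strictness, the remainder is the routine substitution of a convergent geometric operator series.
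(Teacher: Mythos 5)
Your proof is correct and follows essentially the same route as the paper: both arguments reduce the claim to the Neumann series criterion and use the Ger\v{s}gorin theorem to bound the spectrum (the paper bounds the eigenvalues of $L$ by $[0,2\mathfrak{d}]$, you bound those of $C$ by $[-\mathfrak{d},\mathfrak{d}]$, which is the same estimate), concluding from $\beta\mathfrak{d} = \varepsilon\mathfrak{d}/(1+\varepsilon\mathfrak{d})<1$ that the series converges. Your explicit treatment of the strictness of the inequality and of the degenerate edgeless case is a welcome addition of care, but not a different method.
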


\begin{proof}
According to the \emph{Ger\v sgorin theorem} \citep{Gersgorin1931}, all eigenvalues of $L$ lie within the closed interval $\left[ 0, 2 \mathfrak{d} \right]$, so eigenvalues of $\beta C$ are within the unit circle if $\beta < 1 / \mathfrak{d}$. It is guaranteed due to $\beta = \varepsilon / (1 + \varepsilon \mathfrak{d})$.
\end{proof}

Multiplier $\left( 1 - \beta \mathfrak{d} \right) > 0$ in the decomposition of $\mathbf{x}(\varepsilon)$ is irrelevant for the centrality ranking, it just provides that $\sum_{i=1}^n x_i(\varepsilon) = \sum_{i=1}^n d_i$.

\begin{lemma} \label{Lemma2}
Generalized degree centrality measure $\mathbf{x}(\varepsilon) = \lim_{k \to \infty} \mathbf{x}(\varepsilon)^{(k)}$ where
\[
\mathbf{x}(\varepsilon)^{(0)} = \left( 1 - \beta \mathfrak{d} \right) \mathbf{d},
\]
\[
\mathbf{x}(\varepsilon)^{(k)} = \mathbf{x}(\varepsilon)^{(k-1)} + \left( 1 - \beta \mathfrak{d} \right) \left( \beta C \right)^k \mathbf{d}, \quad k = 1,2, \dots \, .\footnote{~Superscript $(k)$ indicates the centrality vector obtained after the $k$th iteration step.}
\]
\end{lemma}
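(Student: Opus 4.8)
The plan is to recognize that the proposed iteration is nothing more than the sequence of partial sums of the Neumann series already established in Proposition~\ref{Prop4}, so the lemma follows immediately from the convergence of that series. Concretely, writing $S_k = \sum_{j=0}^{k} \left( \beta C \right)^j \mathbf{d}$ for the $k$th partial sum, the goal is to show $\mathbf{x}(\varepsilon)^{(k)} = \left( 1 - \beta \mathfrak{d} \right) S_k$ for every $k \geq 0$, and then pass to the limit.

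First I would verify the base case: by definition $\mathbf{x}(\varepsilon)^{(0)} = \left( 1 - \beta \mathfrak{d} \right) \mathbf{d} = \left( 1 - \beta \mathfrak{d} \right) \left( \beta C \right)^0 \mathbf{d} = \left( 1 - \beta \mathfrak{d} \right) S_0$. For the inductive step, assume $\mathbf{x}(\varepsilon)^{(k-1)} = \left( 1 - \beta \mathfrak{d} \right) S_{k-1}$; then the recursion gives
\[
\mathbf{x}(\varepsilon)^{(k)} = \mathbf{x}(\varepsilon)^{(k-1)} + \left( 1 - \beta \mathfrak{d} \right) \left( \beta C \right)^k \mathbf{d} = \left( 1 - \beta \mathfrak{d} \right) \left[ S_{k-1} + \left( \beta C \right)^k \mathbf{d} \right] = \left( 1 - \beta \mathfrak{d} \right) S_k,
\]
since $S_k = S_{k-1} + \left( \beta C \right)^k \mathbf{d}$. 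This closes the induction and identifies the iterates with the scaled partial sums.

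Second, I would take the limit $k \to \infty$. Proposition~\ref{Prop4} already guarantees that $\lim_{k \to \infty} S_k = \sum_{k=0}^{\infty} \left( \beta C \right)^k \mathbf{d}$ exists, the key point being the spectral bound $\beta < 1/\mathfrak{d}$ obtained there via the Ger\v sgorin theorem, which places the eigenvalues of $\beta C$ strictly inside the unit circle. Multiplying by the fixed scalar $\left( 1 - \beta \mathfrak{d} \right)$ preserves convergence, so $\lim_{k \to \infty} \mathbf{x}(\varepsilon)^{(k)} = \left( 1 - \beta \mathfrak{d} \right) \sum_{k=0}^{\infty} \left( \beta C \right)^k \mathbf{d} = \mathbf{x}(\varepsilon)$, the last equality being exactly the statement of Proposition~\ref{Prop4}.

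There is essentially no hard step here: the entire content is bookkeeping that reinterprets the closed-form Neumann series as an accumulating iteration, and all the analytic work (the convergence of the series) was discharged in Proposition~\ref{Prop4}. The only point requiring minor care is making the induction explicit rather than merely telescoping the recursion by inspection, and noting that the scalar multiplier commutes with the limit; neither poses any real difficulty.
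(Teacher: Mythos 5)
Your proof is correct and follows exactly the route the paper takes: the paper's proof of Lemma~\ref{Lemma2} simply states that it is an immediate consequence of Proposition~\ref{Prop4}, and your argument spells out why, by identifying the iterates with the scaled partial sums of the Neumann series and passing to the limit. The extra detail (the explicit induction) is harmless elaboration of the same idea.
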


\begin{proof}
It is the immediate consequence of Proposition~\ref{Prop4}.
\end{proof}

Lemma~\ref{Lemma2} has an interpretation on the network. In the following description the multiplier $\left( 1 - \beta \mathfrak{d} \right)$ is disregarded for the sake of simplicity. Let $G'$ be a graph identical to the network except that $\mathfrak{d} - d_i$ loops are assigned for node $i$. In this way balancedness is achieved with the minimal number of loops, at least one node (with the maximal degree) has no loops.
Graph $G'$ is said to be the \emph{balanced network} of $G$. It is the same procedure as balancing a multigraph by loops in \citet[p.~1495]{Chebotarev2012} and in \citet{Csato2015a}, where $G'$ is called the balanced-graph and balanced comparison multigraph of the original graph, respectively.

Initially all nodes are endowed with an own estimation of centrality by their degree. In the first step, degree of nodes connected to the given one is taken into account: $C \mathbf{d}$ corresponds to the sum of degree of neighbours (including the nodes available on loops). Adding $\mathfrak{d} - d_i$ loops provides that the number of $1$-long paths from node $i$ is exactly $\mathfrak{d}$.\footnote{~In the limit it corresponds to the average degree of neighbours in $G'$ since $\lim_{\varepsilon \to \infty} \beta = 1 / \mathfrak{d}$.}
Then this aggregated degree of objects connected to the given one is added to the original estimation with a weight $\beta$, resulting in $\mathbf{d} + \beta C \mathbf{d}$.

In the $k$th step, the summarized degree of nodes available on all $k$-long paths (including loops) $C^k \mathbf{d}$ is added to the previous centrality, weighted by $\beta^k$ according to the length of the paths. This iteration converges to the generalized degree ranking due to Lemma~\ref{Lemma2}.

Example~\ref{Examp6} illustrates the decomposition.

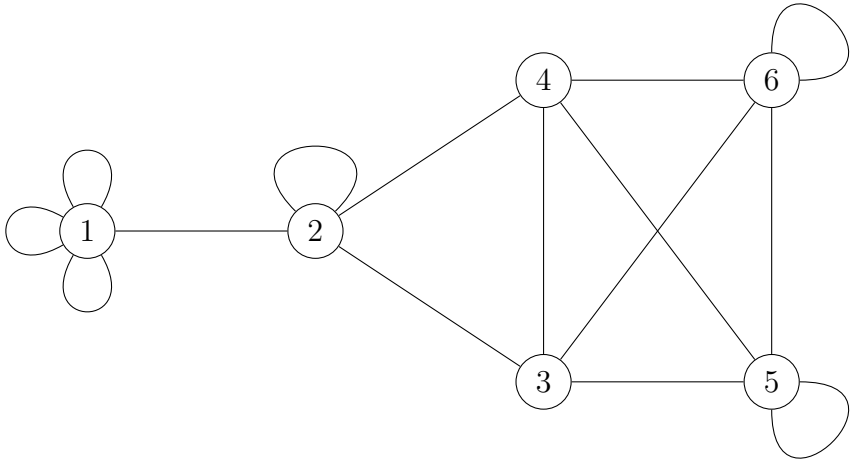
\begin{figure}[htbp]
\centering
\caption{Network of Example~\ref{Examp6}}
\label{Fig7}

\begin{tikzpicture}[scale=1,auto=center, transform shape, >=triangle 45]
\tikzstyle{every node}=[draw,shape=circle];
  \node (n1) at (0,0)  {$1$};
  \node (n2) at (3,0)  {$2$};
  \node (n3) at (6,-2) {$3$};
  \node (n4) at (6,2)  {$4$};
  \node (n5) at (9,-2) {$5$};
  \node (n6) at (9,2)  {$6$};

  \foreach \from/\to in {n1/n2,n2/n3,n2/n4,n3/n4,n3/n5,n3/n6,n4/n5,n4/n6,n5/n6}
    \draw (\from) -- (\to);

\path (n1) edge[in=60,out=120,looseness=8] (n1)
	  (n1) edge[in=150,out=210,looseness=8] (n1)
	  (n1) edge[in=240,out=300,looseness=8] (n1)
	  (n2) edge[in=45,out=135,looseness=8] (n2)
	  (n5) edge[in=0,out=270,looseness=8] (n5)
	  (n6) edge[in=0,out=90,looseness=8] (n6);
\end{tikzpicture}
\end{figure}

\begin{figure}[htbp]
\centering
\caption{Iterated degrees $\mathbf{d}^{(k)}$ in Example~\ref{Examp6}}
\label{Fig8}

\begin{tikzpicture}
\begin{axis}[width=0.8\textwidth, 
height=0.6\textwidth,
symbolic x coords={0,1,2,3,10,25,100},
xtick=data,
xlabel = Number of iterations ($k$),
ybar,
ymin = 0,
ymajorgrids = true,
bar width=8pt,
legend entries={Node $1\quad$,Node $2\quad$,Node $3$ and $4\quad$,Node $5$ and $6$},
legend style={at={(0.5,-0.2)},anchor = north,legend columns = 6}
]

\addplot coordinates{
(0,1.00000000000000)
(1,1.50000000000000)
(2,1.87500000000000)
(3,2.09375000000000)
(10,2.79648399353027)
(25,2.99152842760086)
};

\addplot coordinates{
(0,3.00000000000000)
(1,3.00000000000000)
(2,2.75000000000000)
(3,2.81250000000000)
(10,2.95195388793945)
(25,2.99800013303757)
};

\addplot coordinates{
(0,4.00000000000000)
(1,3.25000000000000)
(2,3.31250000000000)
(3,3.20312500000000)
(10,3.04804706573486)
(25,3.00199986696243)
};

\addplot coordinates{
(0,3.00000000000000)
(1,3.50000000000000)
(2,3.37500000000000)
(3,3.34375000000000)
(10,3.07773399353027)
(25,3.00323585271835)
};
\end{axis}
\end{tikzpicture}
\end{figure}
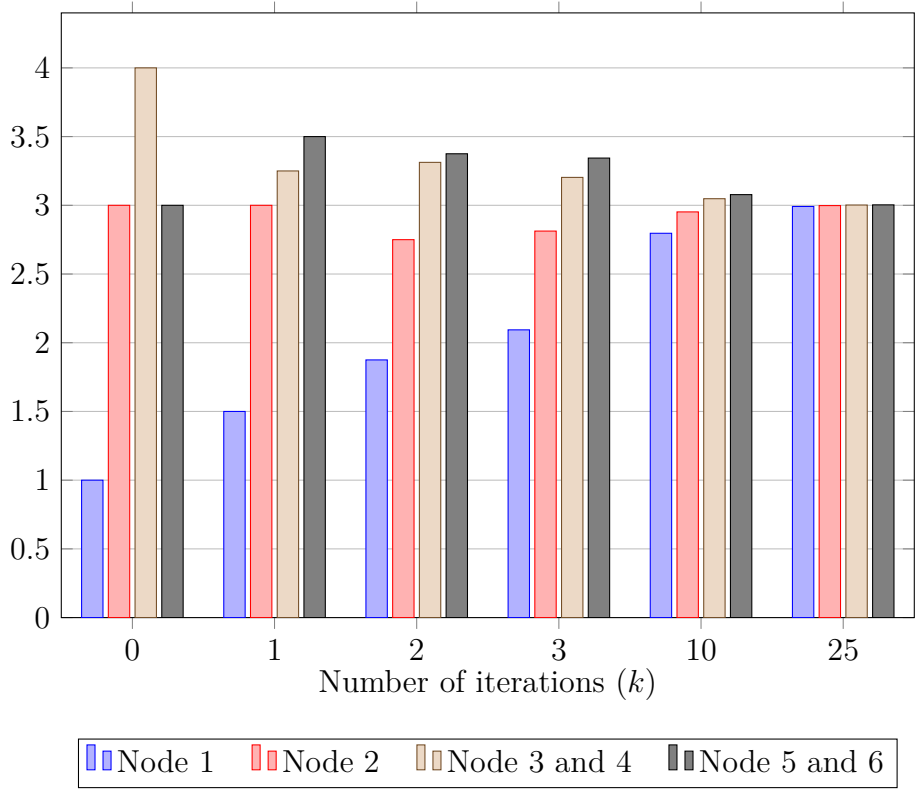

\begin{example} \label{Examp6}
Consider the network whose balanced network is shown on Figure~\ref{Fig7}, where the number of loops are determined by the differences $\mathfrak{d} - d_i$. Nodes $3$ and $4$, and $5$ and $6$ are symmetric, the two pairs have the same centrality for any $\varepsilon$. Generalized degree gives the rather natural ranking of $(3 \sim 4) \succ (5 \sim 6) \succ 2 \succ 1$.

Figure~\ref{Fig8} shows the average degree of neighbours available along a $k$-long path for various $k$-s, that is, $\mathbf{d}^{(k)} = \left[ \left( 1 / \mathfrak{d} \right) C \right]^{k} \mathbf{d}$. Their sum is equal to $\sum_{i=1}^n d_i$.
Lemma~\ref{Lemma2} means that, for instance, $\mathbf{x}(\varepsilon)^{(2)} = \left( 1 - \beta \mathfrak{d} \right) \left[ \mathbf{d}^{(0)} + \beta \mathfrak{d} \mathbf{d}^{(1)} + \beta^2 \mathfrak{d}^2 \mathbf{d}^{(2)} \right]$.
It reveals that nodes $5$ and $6$ are connected to more central nodes than $3$ and $4$. Another interesting fact is that $\mathbf{d}^{(k)}$ is monotonic only in its first coordinate. The elements of $\mathbf{d}^{(25)}$ are almost equal, large powers do not count much. Now $\mathbf{x}(\varepsilon)^{(1)}$ immediately gives the final ranking of nodes.
\end{example}

Two observations can be taken on the basis of examples discussed. The first is that ties in degree are usually eliminated after taking the network structure into account, which can be advantageous in practical applications: in the analysis of terrorist networks, a serious problem can be that standard centrality measures struggle to identify a given number of key thugs because of ties \citep{LindelaufHamersHusslage2013}. In other words, generalized degree has a good level of differentiation.

The second is the possibly slow convergence. In a sparse graph, long paths should be considered in order to get the final centrality ranking of the nodes, however, it is not clear why they still have some importance. Since the iteration depends on the network structure, it will be challenging to give an estimate for how many steps are necessary to approach the final centrality. 

\section{Conclusion} \label{Sec5}

The paper has introduced a new centrality measure called generalized degree. It is based on degree and uses the Laplacian matrix of the network graph. The method carries out a redistribution of a pool filled with the sum of degrees. The effect of neighbours centrality is controlled by a parameter, placing our method between degree and equal centrality for all nodes of a connected component. Inspired by the idea of \citet{Sabidussi1966}, a rank monotonicity axiom has been defined, and a sufficient condition has been provided in order to satisfy it. Besides PageRank \citep{Chienetal2004}, we do not know any other centrality measure with this property.
Furthermore, an iterative formula has been given for the calculation of generalized degree along with an interpretation on the network.

The main advantage of our measure is its degree-based concept. It is recommended to use with a low value of $\varepsilon$ instead of degree, which preserves most favourable properties of degree but has a much stronger ability to differentiate among the nodes and better reflect their role in the network (see Example~\ref{Examp6}). It is especially suitable to be a tie-breaking rule for degree, possible applications involve all fields where degree is used in order to measure centrality.

It is also suggested to test various parameters and follow all changes in the centrality ranking. The interval examined is not necessary to restrict to reasonable values, generalized degree may give an insight about the importance of nodes even if adding rank monotonicity is not guaranteed as its other properties remain valid. 

The research has opened some ways for future work. Generalized degree can be compared with other centrality measures, for example, through the investigation of their behaviour on randomly generated networks.
Rank monotonicity is worth to consider in an axiomatic comparison of centrality measures in the traces of \citet{LandherrFriedlHeidemann2010} and \citet{BoldiVigna2014}.
The iterative formula and the graph interpretation may also inspire a characterization of generalized degree.

Finally, some papers have used centrality measures just to describe the network by a single value of \emph{centrality index}. For instance, \cite{Sabidussi1966} has suggested that $1 / \max \{ f_i(N,A): i \in N \}$ is a good centrality index if centrality measure $f: \mathcal{N}^n \to \mathbb{R}^n$ satisfies some axioms.
Generalized degree improves on a failure of degree: $1 / \max \{ d_i: i \in N \}$ is the same in a complete and a star network of the same order but $1 / \max \{ x_i(\varepsilon): i \in N \}$ is larger in a complete one. We think this observation deserves more attention.

\section*{Appendix}

\emph{Proof of Theorem~\ref{Theo1}}.
Let $x_\ell = x_\ell(\varepsilon)(N,A)$ and $x_\ell' = x_\ell(\varepsilon)(N,A')$ for all $\ell \in N$. It can be assumed without loss of generality that $x_i'-x_i \geq x_j'-x_j$. Let $s = x_i'-x_i$, $t = x_j'-x_j$, $u = \max \lbrace x_\ell'-x_\ell: \ell \in N \setminus \{ i,j \} \rbrace = x_k' - x_k$ and $v = \min \{ x_\ell'-x_\ell: \ell \in N \setminus \{ i,j \} = x_m' - x_m$.
It will be verified that $t \geq u$.

Assume to the contrary that $t < u$. The difference of equations from \eqref{eq_main} for node $k$ gives:
\begin{eqnarray*}
\left( x_k' - x_k \right) + \varepsilon \sum_{\ell \in N \setminus \{ i,j,k \} } a_{k \ell} \left[ \left( x_k' - x_k \right) - \left( x_\ell' - x_\ell \right) \right] + & & \\
+ \varepsilon a_{k i} \left[ \left( x_k' - x_k \right) - \left( x_i' - x_i \right) \right] + \varepsilon a_{k j} \left[ \left( x_k' - x_k \right) - \left( x_j' - x_j \right) \right] & = & d_k'-d_k = 0.
\end{eqnarray*} 
Since $u = x_k' - x_k \geq x_\ell' - x_\ell$ for all $\ell \in N \setminus \{ i,j \}$ and $u \geq t$, we get
\begin{equation} \label{eq1}
u \leq \varepsilon a_{k i} (s-u) \quad \Leftrightarrow \quad u \leq \frac{\varepsilon a_{k i}}{1+\varepsilon a_{k i}} s.
\end{equation}

Now take the difference of equations from \eqref{eq_main} for node $i$:
\[
s + \varepsilon \sum_{\ell \in N \setminus \{ i,j \} } a_{i \ell} \left[ s - \left( x_\ell' - x_\ell \right) \right] + \varepsilon \left( x_i' - x_j' \right) = d_i'-d_i = 1.
\]
Since $u \geq x_\ell' - x_\ell$ for all $\ell \in N \setminus \{ i,j \}$, we get
\[
s + \varepsilon \sum_{\ell \in N \setminus \{ i,j \} } a_{i \ell} (s-u) + \varepsilon \left( x_i' - x_j' \right) = s + \varepsilon d_i (s-u) + \varepsilon \left( x_i' - x_j' \right) \leq 1.
\]
An upper bound for $u$ is known from \eqref{eq1}, thus
\begin{equation} \label{eq2}
1 + \varepsilon \left( x_j' - x_i' \right) \geq  s + \varepsilon d_i (s-u) \geq s + \varepsilon d_i \left( 1- \frac{\varepsilon a_{k i}}{1 + \varepsilon a_{k i}} \right) s = \frac{1 + \varepsilon a_{k i} + \varepsilon d_i}{1 + \varepsilon a_{k i}} s.
\end{equation}
By combining \eqref{eq1} and \eqref{eq2}:
\begin{equation} \label{eq3}
u \leq \frac{\varepsilon a_{k i}}{1+\varepsilon a_{k i}} \frac{1 + \varepsilon a_{k i}}{1 + \varepsilon a_{k i} + \varepsilon d_i} \left[ 1 + \varepsilon \left( x_i' - x_j' \right) \right] = \frac{\varepsilon a_{k i}}{1 + \varepsilon a_{k i} + \varepsilon d_i} \left[ 1 + \varepsilon \left( x_j' - x_i' \right) \right].
\end{equation}

Take the difference of equations from \eqref{eq_main} for node $m$:
\begin{eqnarray*}
\left( x_m' - x_m \right) + \varepsilon \sum_{\ell \in N \setminus \{ i,j \} } a_{m \ell} \left[ \left( x_m' - x_m \right) - \left( x_\ell' - x_\ell \right) \right] + & & \\
+ \varepsilon a_{m i} \left[ \left( x_m' - x_m \right) - \left( x_i' - x_i \right) \right] + \varepsilon a_{m j} \left[ \left( x_m' - x_m \right) - \left( x_j' - x_j \right) \right] & = & d_m'-d_m = 0.
\end{eqnarray*} 
Since $v = x_m' - x_m \leq x_\ell' - x_\ell$ for all $\ell \in N \setminus \{ i,j \}$ and $s \geq t$, we get
\begin{equation} \label{eq4}
v \geq \varepsilon a_{m i} (s-v) + \varepsilon a_{m j} (t-v) \quad \Rightarrow \quad v \geq \frac{\varepsilon \left( a_{m i} + a_{m j} \right)}{1+\varepsilon \left( a_{m i} + a_{m j} \right)} t.
\end{equation}

The difference of equations from \eqref{eq_main} for node $j$ results in:
\[
t + \varepsilon \sum_{\ell \in N \setminus \{ i,j \} } a_{j \ell} \left[ t - \left( x_\ell' - x_\ell \right) \right] + \varepsilon \left( x_j' - x_i' \right) = d_j'-d_j = 1.
\]
Since $v \leq x_\ell' - x_\ell$ for all $\ell \in N \setminus \{ i,j \}$, we get
\[
t + \varepsilon \sum_{\ell \in N \setminus \{ i,j \} } a_{j \ell} (t-v) + \varepsilon \left( x_j' - x_i' \right) = t + \varepsilon d_j (t-v) + \varepsilon \left( x_j' - x_i' \right) \geq 1.
\]
A lower bound for $v$ is known from \eqref{eq4}, thus
\begin{equation} \label{eq5}
1 + \varepsilon \left( x_i' - x_j' \right) \leq t + \varepsilon d_j \left[ 1- \frac{\varepsilon \left( a_{m i} + a_{m j} \right)}{1 + \varepsilon \left( a_{m i} + a_{m j} \right)} \right] t = \frac{1 + \varepsilon \left( a_{m i} + a_{m j} \right) + \varepsilon d_j}{1 + \varepsilon \left( a_{m i} + a_{m j} \right)} t.
\end{equation}

According to our assumption $t < u$, therefore from \eqref{eq3} and \eqref{eq5}:
\[
\frac{1 + \varepsilon \left( a_{m i} + a_{m j} \right)}{1 + \varepsilon \left( a_{m i} + a_{m j} \right) + \varepsilon d_j} \left[ 1 + \varepsilon \left( x_i' - x_j' \right) \right] < \frac{\varepsilon a_{k i}}{1 + \varepsilon a_{k i} + \varepsilon d_i} \left[ 1 + \varepsilon \left( x_j' - x_i' \right) \right].
\]

Obviously, it does not hold if $\varepsilon \to 0$. Now an upper bound is determined for the parameter $\varepsilon$. After some calculations we get:
\[
\frac{1 + \varepsilon \left( a_{m i} + a_{m j} + d_i \right) + \varepsilon^2 \left[ d_i \left( a_{m i} + a_{m j} \right) - a_{k i} d_j \right]}{1 + \varepsilon \left( a_{m i} + a_{m j} + 2 a_{k i} + d_i \right) + \varepsilon^2 \left[ \left( 2 a_{k i} + d_i \right) \left( a_{m i} + a_{m j} \right) + a_{k i} d_j \right]} < \varepsilon \left( x_j' - x_i' \right).
\]
Introduce the notation $\alpha = \varepsilon \left( a_{m i} + a_{m j} + d_i \right) + \varepsilon^2 \left[ d_i \left( a_{m i} + a_{m j} \right) - a_{k i} d_j \right]$ and $y = 1 + 2 \varepsilon a_{ki} + \varepsilon^2 \left[ 2 a_{k i} \left( a_{m i} + a_{m j} + d_j \right) \right] > 1$.
Then the fraction on the left-hand side can be written as $(1 + \alpha) / (y + \alpha)$. However, $(1 + \alpha) / (y + \alpha) > 1 / y$ because $y + \alpha > 0$ and $y > 1$, hence
\[
\frac{1}{1 + 2 \varepsilon a_{k i} + \varepsilon^2 \left[ 2 a_{k i} \left( a_{m i} + a_{m j} + d_j \right) \right]} < \varepsilon \left( x_j' - x_i' \right).
\]
Here $a_{k i},a_{m i},a_{m j} \leq 1$ and $d_j \leq \mathfrak{d}$. As $x_j' - x_i' \leq x_j - x_i \leq \mathfrak{d} - \mathsf{d}$ from boundedness (Proposition~\ref{Prop1}):
\[
1 < (\mathfrak{d} - \mathsf{d}) \left[ (2\mathfrak{d} + 4) \varepsilon^3 + 2 \varepsilon^2 + \varepsilon \right],
\]
which contradicts to the condition of Theorem~\ref{Theo1}.
\hfill{\ding{111}}


\end{document}